\newtheorem{definition}{Definition}[section]
\newtheorem{ass}{Assumption}[section]
\newtheorem{theorem}{Theorem}[section]
\newtheorem{proposition}{Proposition}[section]
\newtheorem{lemma}{Lemma}[section]
\newtheorem{remark}{Remark}[section]
\newcommand{\ud}{\mathrm d}
\newcommand{\ds}{\displaystyle}
\newcommand{\esp}[2][\mathbb E] {#1\left[#2\right]}
\def \I {{\mathbf 1}}
\def \C {\mathcal C}
\def \F {\mathcal F}
\def \N {\mathcal N}
\def \P {\mathbf P}
\def \I {{\mathbf 1}}
\def \R {\mathbb R}
\def \bF {\mathbb F}
\def \bE {\mathbb E}
\def \bN {\mathbb N}
\def \bpi {\boldsymbol{\pi}}
\begin{document}



\title[Portfolio optimization for a large investor]{{Portfolio optimization for a large investor controlling market sentiment  
under partial information}}

\author[S. Altay]{S\"{u}han Altay}
\address{S\"{u}han Altay, \textrm{Department of Financial and Actuarial Mathematics, Vienna University of Technology, Wiedner Hauptstrasse 8-10, 1040 Vienna, Austria}}
\email{altay@fam.tuwien.ac.at}
\author[K.~Colaneri]{Katia Colaneri}
\address{Katia Colaneri, \textrm{Department of Economics,
 University of Perugia, Via Alessandro Pascoli, 20, I-06123 Perugia, Italy.}}\email{katia.colaneri@unipg.it}
\author[Z.~Eksi]{Zehra Eksi}
\address{Zehra Eksi, \textrm{Institute for Statistics and Mathematics, WU-University of Economics and Business, Welthandelsplatz 1, 1020, Vienna, Austria.}}\email{zehra.eksi@wu.ac.at}

\begin{abstract}
 We consider an investor faced with the utility maximization problem in which the risky asset price process has pure-jump dynamics affected by an unobservable continuous-time finite-state Markov chain, the intensity of which can also be controlled by actions of the investor. Using the classical filtering theory, we reduce this problem with partial information to one with full information and solve it for logarithmic and power utility functions. In particular, we apply control theory for piecewise deterministic Markov processes (PDMP) to our problem and derive the optimality equation for the value function and characterize the value function as the unique viscosity solution of the associated dynamic programming equation. Finally, we provide a toy example, where the unobservable state process is driven by a two-state Markov chain, and discuss how investor's ability to control the intensity of the state process affects the optimal portfolio strategies as well as the optimal wealth under both partial and full information cases.

{\textsc{Keywords} : }{\textit{ utility maximization, regime-switching, market sentiment, partial information, piecewise deterministic Markov processes.}}

{\textsc{AMS Classification}:}{\textit{ 60J10,60J75,93E11,93E20}}

\end{abstract}

\maketitle



\section{Introduction}
The influence of large investors, such as hedge funds, mutual funds, and insurance companies, on prices of risky assets, can be studied from very different viewpoints ranging from direct price impact of order execution (selling or buying) to feedback effects from trading to hedge portfolios of derivatives written on the underlying. However, there is also an influence of large investors on the overall market sentiment that arises from their perceived informational superiority. That is, most of the time, the rest of the market takes large investors' portfolio decisions as signals revealing an important insider information not available to small or price-taking investors. Moreover, due to the herding behavior, this effect can be intensified when markets are caught up in certain extreme situations like speculative bubbles or market downturns. Of course, by knowing that they have such an influence on the market, large investors can exploit this fact by changing their portfolio and consumption choices during those times and try to gain an advantage\footnote{For example in the US large institutional investor needs to fill the SEC Form 13F, a form with the Securities and Exchange Commission (SEC) also known as the Information Required of Institutional Investment Managers Form. It is a required form from institutional investment managers with over $100$ million in qualifying assets. It contains information about the investment manager and a list of their recent investment holdings.}. However, it is difficult, even for a large investor, to observe the exact state of the overall market and its effect on the price of the risky asset and hence to act accordingly. Not knowing the exact state of the environment naturally necessitates a partial information setting, in which the large investor only observes the price process of the risky asset.

Therefore in this study, we solve a finite-time utility maximization problem by considering a partially observable regime-switching environment, in which there is a large investor (or group of institutional investors) that has control over the intensity matrix of the continuous-time finite state Markov chain governing the state of the environment. We allow large investor's portfolio choices, as a fraction of the wealth invested in the risky asset, to have an indirect but persistent effect on the price process, through dependence on the controlled intensity of the Markov chain with next-neighborhood-type dynamics. We call this effect {\em market impact}. By taking the generator matrix of the unobservable Markov chain as a function of portfolio holdings of the large investor, and focusing on the price process with pure-jump dynamics affected by the unobservable Markov chain, we solve the problem of utility maximization from terminal wealth for logarithmic and power utility preferences. A similar control problem for optimal investment and consumption for a large investor is studied first by \cite{busch2013optimal} in the full information case with asset prices following jump-diffusion dynamics and a market with two possible states.

To characterize the optimal strategy in a partial information setting, we first solve the corresponding filtering problem and we derive Kushner--Stratonovich type filtering equations using innovations approach. Once the filtering problem is solved, we reduce the optimal control problem under partial information to a full information one, as e.g., in \cite{bauerle2007portfolio,ceci1998partially} where unobservable variables are replaced by their filtered estimates. Since the state of the resulting optimal control problem is piecewise deterministic, we resort to the theory of optimal control for piecewise deterministic Markov processes (PDMP) given in \cite{davis1984piecewise} (see also \cite{davis1993markov} for more details). To be precise, the idea is that the corresponding piecewise deterministic control problem can be recast as a Markov decision process, in which the value function is characterized by a fixed point argument. Here we should note that, although identifying the optimal control problem PDMP with a Markov decision process is well studied (see \cite{davis1993markov, dempster1992necessary, almudevar2001dynamic, forwick2004piecewise,bauerle2010optimal,costa2013continuous} and references therein), to the best of our knowledge, a concrete application of optimal control of PDMP that covers the control of the intensity of an unobservable Markov chain is novel. To this extent, we use modifications of certain results from \cite{colaneri2016shall}, in which the main motivation is to study optimal liquidation in a partial information setting with asset prices having pure-jump dynamics (see also Section \ref{Sec:PDMP} for a deeper discussion). We characterize the value function as the unique fixed point of the reward operator, and further, we obtain a representation in terms of the unique viscosity solution of the Hamilton--Jacobi--Bellman (HJB) equation.

In our setting, the state of the environment with regime-switches can be interpreted in various ways. One natural interpretation, for example in a two-state case, is that states can be characterized as ``bear'' or ``bull'' market sentiments so that the large investor try to change the direction (uptrend or downtrend) of the market by her portfolio choices. Similarly, one may also explain those states as different levels of market ``liquidity'' in a market microstructure framework, or different stages of a business cycle in a more general macrostructure framework. In the former one, a large investor can be seen as a liquidity provider or a market maker, whereas in the latter, she can be considered as a central planner such as a central bank or government\footnote{Certain central banks (Japan and Swiss) around the word have recently invested heavily in stock markets. Although their objective is different than utility maximization from terminal wealth, the same setting (indirectly influencing the economy to give a boost) can be analyzed in the same way.}.

We should also remark that manipulation-type strategies pursued by large investors, in which there is an uncertainty coming from a market reaction against those manipulation attempts, can be modeled in this partially observed control framework. In a similar vein to the credit risk modeling, our modeling framework can be considered as a reduced-form modeling of market manipulation since the impact of large investor on prices is indirect via her influence on market sentiments as opposed to models with direct impact on structural variables such as drift or volatility of the asset price process (see, e.g., \cite{jarrow1992market, jarrow1994derivative} for market manipulation models with large investors having a direct impact on asset price dynamics in discrete and continuous time). In particular, our setting allows for a large investor to change the probability of being in a ``bull'' or ``bear'' market by her actions. For example, by short-selling, a large investor may prevent the market going to a ``bull'' state and hence gain advantage of a ``bear'' market sentiment. Similarly, one can use the proposed model to analyze herding and momentum like behavioural effects on stock prices arising from large investors' portfolio choices, since for example, in our proposed setting with pure-jump type asset price dynamics, one can mimic a market situation, in which a large investor try to influence the market sentiment by changing her portfolio and hence switch it from ``bear'' to ``bull'', where upward jumps are observed more likely, or vice-versa.

Considering the high-frequency nature of the markets that large investors are involved with (see \cite{konikov2002option,elliott2006option} for asset prices with Markov modulated pure-jump dynamics), we should also remark that our choice of working with a pure jump process modulated by a Markov chain is not restrictive.

There is ample amount of literature related to the optimal decision of a large investor, analyzed in various settings. The most related work to ours is \cite{busch2013optimal} that studies the optimal consumption-portfolio choice problem, in which the asset price dynamics are given by a jump-diffusion affected by the regime-switching environment controlled by a large investor in a full information setting. They show that optimal strategies have significant deviations from the strategies obtained in the classical Merton problem. More importantly, they show that there can be situations (market manipulations) such that the large investor can consume even though she has no gain in utility from consumption. Generally, in the literature, the effect of large investors on asset prices are direct in the sense that decision variables (such as portfolio holdings, the speed of trading, etc.) have shown up in the drift or the volatility of the risky asset price process). For instance, the models of \cite{cvitanic1996hedging}, \cite{cuoco1998optimal}, \cite{kraft2011large} and \cite{eksi2017portfolio} examine optimal consumption and investment problem of a large investor with portfolio choices affecting the instantaneous expected returns in various settings. In the context of optimal order execution problems where the stock price process is driven by a diffusion, investors impact is modeled by volume or speed of trading affecting directly the drift (see, e.g., Almgren--Chriss model \cite{almgren2001} and its variants).

There is also a large strand of literature concerning the portfolio optimization problems with Markov modulated price dynamics under partial information. \cite{lakner} and \cite{lakner1998optimal} coonsder the case in which the drift uncertainty is modeled by a linear Gaussian process. \cite{karatzas1998bayesian} has studied the similar problem with a constant but unknown drift. \cite{sass2004} and \cite{haussmann2004} have treated the portfolio optimization problem in a multi-asset setting under partial information, and found the optimal portfolio strategy with martingale approach. On the other hand, \cite{rieder2005portfolio} have addressed the portfolio optimization problem with unobservable Markov chain modulated drift process by using a dynamic programming approach. {\cite{bjork2010optimal} considers a general setting and provides explicit representations of the optimal wealth and investment processes for the utility maximization problem under partial information by using the martingale approach. \cite{frey2012portfolio} solves the portfolio optimization problem under partial information by including expert opinions. Regarding portfolio optimization problems under partial information, one can finally refer to \cite{pham2011portfolio} giving a very broad overview of previous studies on the subject. For the full  information case, there are also studies analyzing portfolio selection problems in a Markov regime-switching framework (see for example \cite{zhou2003markowitz}, \cite{bauerle2004port}, and \cite{sotomayor2009explicit}).

To summarize our contributions, firstly we solve the utility maximization problem for logarithmic and power utility preferences with indirect impact arising from controlling the intensity of the Markov chain both under full and partial information settings. For comparison purposes, we also give solutions to those problems without impact, that is, when there is no control of the intensity. Even for the simple logarithmic utility case, the presence of indirect impact makes point-wise maximization impossible and hence we need to rely on dynamic programming techniques. Secondly, we transform the partial information problem to a full information problem by using stochastic filtering and apply control theory for piecewise deterministic Markov processes (PDMP) to our problem to derive the optimality equation for the value function. We characterize the value function as the unique viscosity solution of the associated dynamic programming equation. Thirdly, by focusing on a two-state Markov chain example, we show that there is always a gain for a large investor from controlling the intensity of the Markov chain both in full and partial settings albeit it is smaller in the latter one. In particular, the large investor can take advantage of the ``bear'' state of the market by short-selling. Also optimal strategies are more aggressive in the presence of market impact such that the large investor buys more in the ``bull'' state and short sells more in the ``bear'' state compared to the corresponding no-impact case. Also it is evident from numerical examples that, as time approaches to the maturity, optimal portfolio strategies with and without impact from intensity control converges to each other under both full and partial information settings.

This paper is structured as follows. In Section \ref{sec:framework}, we introduce the underlying framework as well as the main assumptions used afterwards. In Section \ref{sec:optimization_full_info}, we study the optimization problem under full information and also give the verification result associated to it. Section \ref{sec:optimization_partial_info} contains the optimization problem under partial information and reduction of the problem to full information via stochastic filtering as well as the PDMP techniques to solve the problem and characterization of the optimal value function via unique viscosity solution of the related HJB. Finally, in Section \ref{sec:toy_model}, we present a two-state Markov chain example and discuss model implications for a large investor. We also provide an Appendix, containing technical proofs.

\section{Underlying Framework}\label{sec:framework}
We consider a finite time interval $[0,T]$ and continuous trading in the market. We are given the probability space  $(\Omega,\mathcal{F},\mathbb{F},\mathbb{P})$, where $\mathbb{F}=\{\mathcal{F}_t, \ t \in [0,T]\}$ satisfies the usual conditions; all processes we consider here are assumed to be $\mathbb{F}$-adapted.

We have an investor with a given initial wealth $w\in \mathbb{R}_{> 0}$, and whose objective is to form a self-financing portfolio over the finite period $[0,T]$ in order to maximize the expected utility from terminal wealth by investing in a risky asset and in a risk-free bond. Let $h=\{h_t, \ t \in [0,T]\}$ be the $\bF$-predictable process denoting the fraction of wealth invested in a risky asset. 
Then, $1-h_t$ gives the fraction of the wealth invested in the bond at time $t\in[0,T]$.
We allow for the short-selling of the risky asset and the risk-free bond.  That is, $h_t\in\mathbb{R}$ for every $t \in [0,T]$. We work under

\begin{ass}\label{ass0}
 $h_t \in [-L, L]$, for some $L >0$, for every $t \in [0,T]$.
\end{ass}

This assumption is needed, for example, to deal with set of controls taking values in a compact space. We will see that for the examples considered in Section \ref{sec:toy_model}, this assumption is not restrictive since we obtain an optimal control taking values in $(-L, L)$. We denote by $Y^{(h)}$ a continuous-time finite-state Markov chain representing the state of the market. $Y$ takes values in the canonical state space $\mathcal{E}=\{e_1,e_2,...,e_K\}$ where $e_k$ is the $k$th basis column vector of $\mathbb{R}^K$. The initial distribution of the Markov chain is given by  $\pi_0=(\pi_0^1,\cdots,\pi_0^K)$. The notation $Y^{(h)}$ stands for the fact that we assume that the action of the investor has an impact on the state of the market. Formally, we have that the infinitesimal generator of $Y^{(h)}$ is of the form $Q(h_t)=(q^{i,j}(h_t))_{i,j\in\{1,\dots,K\}}$\footnote{Note that the generator is well defined since $h$ is assumed to be predictable.}. To keep the notation simple, in the following we restrict to the case with next-neighbour dynamics that is $q^{i,j}(\cdot)=0$ if $|i-j|>1$. However note that results can be easily extended to the general one. This implies the following structure for the generator

{\small\begin{equation*} \label{eq:generator_MC}
Q(h_t)=\left(
\begin{array}{cccccc}
-q^{1,2}(h_t)&q^{1,2}(h_t)&0&\dots&0&0\\
q^{2,1}(h_t)&-q^{2,1}(h_t)-q^{2,3}(h_t)&q^{2,3}(h_t)&\dots&0&0\\
\vdots&\vdots&\vdots&\ddots&\vdots&\vdots\\
0&0&0&\dots&q^{K,K-1}(h_t)&-q^{K,K-1}(h_t)
\end{array}
\right)
\end{equation*}}

\noindent where  $q^{i,j}:[-L,L]\to\R_{\ge 0}$ is a nonnegative continuous function for $i \neq j$ and $i,j \in \{1, \dots, K\}$.  

We consider a risk-free bond and a risky asset as available instruments in the market, with price processes $B=\{B_t, \ t \geq 0\}$ and $S=\{S_t, \ t \ge 0\}$, respectively. The bond price is assumed to follow
\begin{equation*}\label{bond}
\ud B_t=\rho B_t\ud t, \quad B_0\in \R_{>0},
\end{equation*}
where $\rho>0$ is the instantaneous risk-free rate.

The risky asset price process has pure-jump dynamics that is affected from the state of the market. Formally, it evolves according to the following equation:
\begin{equation*}\label{eq:bid_price}
\ud S_t=S_{t^-}\int_{\mathcal Z} G(t,Y^{(h)}_{t^-}, \zeta) \N(\ud t, \ud \zeta),\quad S_0\in \R_{>0},
\end{equation*}
where  $\N(\ud t, \ud \zeta)$ is a Poisson random measure on $\R_{\geq 0}\times \mathcal Z$, with $\mathcal Z\subseteq \R$, having finite intensity $\varsigma(\ud \zeta)\ud t$ independent of the Markov chain $Y^{(h)}$, and $G:[0,T]\times \mathcal{E}\times \mathcal Z \to \R$ is a measurable function, continuous in time and satisfying $$\mathbb{E}\left[\int_0^T\int_{\mathcal Z} G^2(t,Y^{(h)}_{t^-}, \zeta)\varsigma(\ud \zeta)\ud t \right]<\infty.$$ In order to ensure the non-negativity of the process $S$ we further assume that $1+ G(t,e_i,\zeta)>0$ for every $(t,\zeta)\in [0,T]\times \mathcal Z$ and $i\in \{1, \dots, K\}$ and moreover we assume that equation~\eqref{eq:bid_price} has a unique solution. A set of sufficient conditions for uniqueness of the solution is given, for example, in \cite[Theorem 1.19]{oksendal2005applied}.

Let  $R:=\{R_t, \ t \in [0,T]\}$ be the return process,
\begin{equation*}\label{eq:return}
\ud R_t=\int_{\mathcal Z} G(t,Y^{(h)}_{t^-}, \zeta) \N(\ud t, \ud \zeta),
\end{equation*}
and introduce the random measure $\mu(\ud t,\ud z)$ associated to its jumps
\[
\mu(\ud t, \ud z):=\sum_{s:\Delta R_s\neq 0}\I_{\{s, \Delta R_s\}}(\ud t, \ud z).
\]

Then the following equality holds
\begin{equation*}\label{eq:return2}
R_t=\int_0^t\int_{\mathcal Z}  \ G(t, Y^{(h)}_{t^-},\zeta) \ \N(\ud t, \ud \zeta)=\int_0^t\int_\R z \ \mu(\ud t, \ud z),
\end{equation*}
for every $t \in [0,T]$. We denote the $(\bF,\P)$-dual predictable projection of the measure $\mu$ by $\eta^{\P}(t, Y_{t^-}^{(h)},\ud z)\ud t$. For every $A\in\mathcal{B}(\R)$ the following holds
\[
\eta^{\P}(t, Y_{t^-}^{(h)}, A)=\varsigma(D^A_t),
\]
where $D^A_t:=\{\zeta \in \mathcal Z: \ G(t,Y^{(h)}_{t^-},\zeta)\in A\setminus\{0\}  \}$, see, e.g. \cite[Chapter 8]{bremaud1981point}. 
The assumptions on $G$ and $\varsigma$ imply that $\eta^\P(t,e_i,z)$ is continuous in time and that
\begin{equation}\label{eq:integrability}\esp{\int_0^T\int_{\R}z^2\eta^{\P}(t, Y_{t^-}^{(h)},\ud z)\ud t}< \infty. \end{equation}

Let $W^{(h)}=\{W^{(h)}_t, \ t \in [0,T]\}$ be the wealth process corresponding to a self-financing strategy $h=\{h_t,  t \in [0,T]\}$. The dynamics of  $W^{(h)}$ is given by
\begin{equation}\label{wealthdyn}
\ud W^{(h)}_t=W^{(h)}_{t^-}\left((1-h_t)\rho \ud t+ h_{t}\int_{\R}z\mu(\ud t, \ud z) \right), \quad W^{(h)}_0\in \mathbb{R}_{>0}.
\end{equation}

In order to ensure that the wealth process is positive we consider investment strategies that satisfy
\begin{ass}\label{ass2}
$\eta^{\P}(t, e_i, \Theta)=0$  for every $t \in [0,T]$ and $i\in\{1,\dots,K\}$, where  $\Theta=\{z \in \R:\ 1+h_tz\le0, \ t \in [0,T] \}$.
\end{ass}

Note that this assumption can be weakened if, for instance, short selling is prohibited.

In the sequel, whenever there is no ambiguity, for the sake of notational ease, we suppress the dependence of the processes $Y^{(h)}$ and  $W^{(h)}$ on the strategy $h$, and simply write $Y$ and $W$. Finally, we can write the solution for \eqref{wealthdyn}
\begin{align*}
W_t=W_0\exp\left\{\int_0^t\left((1-h_s)\rho+\int_{\R}\log(1+h_s z)\eta^{\P}(s, Y_{s^-}\ud z)\right) \ud s \right. \\
+\left.\int_0^t\int_{\R} \log(1+h_{s} z)\nu(\ud s, \ud z)  \right\},   \nonumber
\end{align*}
for every $t \in [0,T]$, where
\begin{equation}\label{eq:nu}
\nu(\ud t,\ud z):=\mu(\ud t,\ud z)-\eta^{\P}(t, Y_{t^-},\ud z)\ud t
\end{equation}
indicates the compensated jump measure associated with $\mu$.


In the rest of the paper, we always work under the \textit{standing assumptions} made in Section \ref{sec:framework}.  

\section{Optimization Problem under Full Information}\label{sec:optimization_full_info}
In the first step we assume that the investor has the full knowledge of the market. Formally this means that the available information is given by the filtration $\bF$.
This leads to the following definition of admissible strategies.

\begin{definition}
  A portfolio strategy $h$ is $\bF$-admissible if it is $\bF$-predictable and Assumptions \ref{ass0} and \ref{ass2} hold. We denote the set of $\bF$-admissible strategies by $\mathcal H$.
\end{definition}

Suppose we are given a strictly increasing, strictly concave and continuously differentiable utility function $U:\mathbb{R}_{>0}\rightarrow \mathbb{R}$ satisfying Inada conditions, i.e. $\lim_{w\to 0}\frac{\partial U}{\partial w}(w)=\infty$ and $\lim_{w\to \infty}\frac{\partial U}{\partial w}(w)=0$.
The goal of the investor is to solve the following optimization problem
\begin{equation}\label{opt_full}
\max\,\mathbb{E}^{t,w,i}\left[ U(W_T)\right],
\end{equation}
over all admissible strategies, subject to the initial value of the wealth $W_t=w$ and initial state $Y_t=e_i$ for some $i \in \{1, \dots, K\}$.


The value function for the current optimization problem is
\begin{align*}\label{vf}V(t,w,e_i)=\sup_{h \in \mathcal{H}}\mathbb{E}^{t,w,i}[U(W_T)].\end{align*}
If $V$ is continuous and differentiable with respect to the first two arguments, i.e. $V \in \C^{1,1}_b([0,T] \times \R_{>0}\times\mathcal{E})$, then it can be characterized as the unique classical solution of the HJB equation given by
\begin{align*}
0=\sup_{h \in [-L,L]}\left\{\mathcal L^{h} V(t,w,e_i)\right\},
\end{align*}
with $\mathcal L^h$ being the $(\bF, \P)$-Markov generator of the pair $(W,Y)$. Explicitly, we have
 \begin{align}
  0=& \sup_{h \in [-L,L]} \Bigg\{ \frac{\partial V}{\partial t}(t,w,e_i)+ \frac{\partial V}{\partial w}(t,w,e_i) w(1-h)\rho+ \sum_{j=1}^K (V(t,w,e_j)-V(t,w,e_i)) q^{i,j}(h) \\
 &+\int_\R \left[V\left(t,w(1+h z), e_i\right)-V(t,w,e_i)\right] \eta^{\P}(t,e_i,\ud z)\Bigg\},\label{eq:HJB_1}
\end{align}
with the final condition $V(T, w, e_i)=U(w)$, for every $w\in \R_{>0}$ and $i\in\{1,\dots,K\}$.
In the next theorem, by combining classical results we provide a verification result for the optimization problem \eqref{opt_full}.
\begin{theorem}\label{thm:verification}
Let $\Upsilon$ be a solution to equation \eqref{eq:HJB_1}, and assume that for every control $h\in \mathcal H$  the following conditions hold
\begin{align}
&\mathbb{E}\Bigg[\int_0^T \!\!\int_\R |\Upsilon\left(s,W^{(h)}_{s^-}(1+h_{s} z), Y_{s^-}\right)-\Upsilon(s,W^{(h)}_{s^-}, Y_{s^-})| \eta^{\P}(\ud s,Y_{s^-},\ud z)\ud s \Bigg]<\infty,\label{eq:cond1}\\
&\mathbb{E}\Bigg[\int_0^T \sum_{j=1}^K \sum_{k \neq j}|\Upsilon(s, W^{(h)}_{s^-},  e_j)-\Upsilon(s, W^{(h)}_{s^-}, Y_{s^-})| q^{k,j}(h_s)\I_{\{Y_{s^-}=e_k\}}\ud s \Bigg]<\infty.\label{eq:cond2}
\end{align}
Then,
  \begin{itemize}
  \item[i.] $\Upsilon(t,w,e_i)\geq \mathbb{E}^{t,w,i}\big[U(W^{(h)}_T)\big]$, for all  $(t,w,e_i)\in [0,T]\times \mathbb R_{>0}\times \mathcal E$;
  \item[ii.] if there exist a strategy $h^*\in \mathcal H$ such that
  \[
  h^*_s\in\underset{h\in [-L,L]}{\arg\max}\left[ \mathcal{L}^h \Upsilon(s,W^{(h^*)}_s,Y^{(h^*)}_s)   \right] \quad \P-a.s.
  \]
  for every $s \in [0,T]$, then $\Upsilon(t,w,e_i)=V(t,w,e_i)$. Moreover $h^*$ is an optimal portfolio strategy.
  \end{itemize}
  \end{theorem}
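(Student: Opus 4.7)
The plan is a standard verification argument based on Itô's formula for jump processes with a Markov chain component, combined with the HJB inequality satisfied by $\Upsilon$. Since the pair $(W^{(h)}, Y)$ is a jump process whose jumps come from the random measure $\mu$ (affecting $W^{(h)}$) and from the finite-state chain $Y$, I would first write the pathwise dynamics of $\Upsilon(s, W^{(h)}_s, Y_s)$ over $[t,T]$ using the appropriate change-of-variables formula. For a fixed admissible $h \in \mathcal H$, this yields
\begin{align*}
\Upsilon(T, W^{(h)}_T, Y_T) = \Upsilon(t,w,e_i) + \int_t^T \mathcal L^{h_s} \Upsilon(s, W^{(h)}_s, Y_s)\,\ud s + M^h_T - M^h_t,
\end{align*}
where $M^h$ is a local martingale built from the compensated jump measure $\nu$ of \eqref{eq:nu} (accounting for the jumps of $W^{(h)}$) and from the compensated counting measures of the transitions of $Y$ (whose intensities are $q^{k,j}(h_s)$ on $\{Y_{s^-}=e_k\}$).

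The integrability hypotheses~\eqref{eq:cond1} and~\eqref{eq:cond2} are tailored precisely so that the two components of $M^h$ are true martingales rather than merely local martingales: \eqref{eq:cond1} controls the $\nu$-integral driving jumps in $W^{(h)}$, and \eqref{eq:cond2} controls the compensated transition martingales of the chain. Thus, taking $\mathbb E^{t,w,i}$ on both sides kills the martingale part, and using the terminal condition $\Upsilon(T,\cdot,\cdot)=U(\cdot)$ gives
\begin{align*}
\mathbb E^{t,w,i}[U(W^{(h)}_T)] = \Upsilon(t,w,e_i) + \mathbb E^{t,w,i}\!\!\left[\int_t^T \mathcal L^{h_s} \Upsilon(s, W^{(h)}_s, Y_s)\,\ud s\right].
\end{align*}
Because $\Upsilon$ solves \eqref{eq:HJB_1}, the integrand on the right is bounded above by zero for every admissible $h_s \in [-L,L]$, so the expectation is non-positive and part~(i) follows.

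For part~(ii), I would apply exactly the same computation to the candidate control $h^*$. By its defining maximality property, $\mathcal L^{h^*_s} \Upsilon(s, W^{(h^*)}_s, Y^{(h^*)}_s)=0$ almost surely for every $s\in[t,T]$, so the integral term vanishes and we obtain the equality $\Upsilon(t,w,e_i)=\mathbb E^{t,w,i}[U(W^{(h^*)}_T)]$. Combined with part~(i), which gives $\Upsilon(t,w,e_i)\ge V(t,w,e_i)$ via the supremum over $\mathcal H$, and with the trivial bound $V(t,w,e_i)\ge \mathbb E^{t,w,i}[U(W^{(h^*)}_T)]$ (since $h^*\in\mathcal H$), we conclude $\Upsilon(t,w,e_i)=V(t,w,e_i)$ and that $h^*$ attains the supremum.

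The main obstacle is the martingale step: the Itô expansion naturally produces only local martingales, and one must justify the absence of a localization remainder in the limit. This is where~\eqref{eq:cond1} and~\eqref{eq:cond2} enter: together with Assumption~\ref{ass0} (compact control range), Assumption~\ref{ass2} (positivity of $W^{(h)}$, ensuring $\Upsilon$ is evaluated inside its domain), and the integrability bound~\eqref{eq:integrability} on $\eta^{\P}$, they give an $L^1$ bound on the jump intensities weighted by the increments of $\Upsilon$, which is exactly what a localizing sequence $\tau_n \uparrow T$ together with dominated convergence needs in order to pass to the limit and recover the true martingale property. Once that technicality is cleared, the rest of the argument is the routine HJB verification sketched above.
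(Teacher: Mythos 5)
Your proposal is correct and follows essentially the same route as the paper's proof: an It\^{o} expansion of $\Upsilon(s,W^{(h)}_s,Y_s)$ producing the integral of $\mathcal L^{h}\Upsilon$ plus compensated jump martingales for the wealth jumps and the chain transitions, with conditions \eqref{eq:cond1}--\eqref{eq:cond2} guaranteeing the true martingale property (the paper invokes \cite[Theorem 26.12 part 2]{davis1993markov} where you argue via localization and dominated convergence), followed by the HJB inequality for part (i) and equality along the maximizer $h^*$ for part (ii). No gaps.
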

The proof is in Appendix \ref{app:proofs}.

In the sequel, we deal with the utility maximization problem for an investor who has logarithmic and power utility preferences, respectively.

\subsection{Logarithmic utility}\label{sec:log_utility_full_info}

We consider the portfolio optimization problem for a large investor with logarithmic utility preference. That is, we have $U(w)=\log(w)$.
For comparison purposes, we first study the degenerate case,  where the generator of the Markov chain does not depend on the actions of the investor. This is the case where the investor has no market impact. Then we move to our primary interest, the case with market impact.
Normally the logarithmic utility is the simplest case and can be solved by pointwise maximization. However, with the inclusion of the market impact this is not possible anymore since the current actions of the investor have an influence on the future states of the market and therefore may change the jump intensity of the asset price process.

\subsubsection{Logarithmic utility - No market impact}
To begin with we provide a characterization of the optimal strategy and a stochastic representation for the value function in the setting where the intensity of the Markov chain does not depend on the portfolio strategy, that is, $q^{i,j}(h)\equiv q^{i,j}$, for $i,j\in\{1,\dots,K\}$ and every control $h$. In this case the optimal control problem can be solved directly. First note that by applying the It\^{o}'s formula we get
\[
V(t,w,e_i)=\log(w)+\sup_{h \in \mathcal H} \widetilde \beta(t,e_i;h),
\]
where
\begin{align*}
\widetilde \beta(t, e_i;h)=\bE^{t,i}\left[\int_t^T\left((1-h_s)\rho+\int_{R}\log(1+h_s z) \eta^{\P}(s,Y_{s^-},\ud z)\right)\ud s \right. \\
\left.+\int_t^T\int_{\R} \log(1+h_{s} z)\nu(\ud s, \ud z)\right].
\end{align*}

\begin{proposition}\label{logu2} Suppose $U(w)=\log(w)$ for $w>0$.
\begin{itemize}
\item[i)] Let $h^*(t, e_i)$ 
satisfy either
\begin{align}\label{eq:logoptimalna}
\int_{\R}\frac{z}{1+h^{\ast}(t, e_i) z} \eta^{\P}(t,e_i,\ud z)=\rho
\end{align}
or $h^{\ast}(t,e_i)\in\{-L,L\}$, for every $i\in\{1,\dots,K\}$. Then the optimal strategy $h^*_t=h^*(t, e_i)$ for every $t \in [0,T]$ and $i \in \{1, \dots, K\}$.
\item[ii)] The value function is of the form
\end{itemize}
\begin{equation*}\label{eq:log_utility_full}
V(t,w ,e_i)=\log(w)+\bE^{t,i}\left[\int_t^T\left((1-h^{{\ast}}_s)\rho+\int_{R}\log(1+h^{{\ast}}_{s} z) \eta^{\P}(s,Y_{s^-},\ud z)\right)\ud s \right].
\end{equation*}
\end{proposition}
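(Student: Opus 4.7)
The approach is to obtain an explicit stochastic representation of $\log(W_T)$ via It\^o's formula and, exploiting the fact that the dynamics of $Y$ do not depend on $h$ in the no-impact setting, reduce the control problem to a pointwise (in $\omega$, $t$) maximization of a deterministic concave function.

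First I would apply the It\^o formula to the solution of \eqref{wealthdyn}, using that $\log$ linearizes the multiplicative jumps, to obtain
\begin{align*}
\log(W_T) &= \log(w) + \int_t^T \Bigl((1-h_s)\rho + \int_\R \log(1+h_s z)\,\eta^{\P}(s,Y_{s^-},\ud z)\Bigr)\ud s \\
&\quad + \int_t^T\!\int_\R \log(1+h_s z)\,\nu(\ud s,\ud z),
\end{align*}
with $\nu$ as in \eqref{eq:nu}. The first step is then to show that, under the standing assumptions, the compensated integral is a true $(\bF,\P)$-martingale with zero expectation: using $h \in [-L,L]$, Assumption \ref{ass2}, continuity of $\eta^{\P}$ together with the second-moment bound \eqref{eq:integrability} and the elementary estimate $|\log(1+hz)|^2 \le C(L)\,z^2$ on the support of $\eta^{\P}(t,e_i,\ud z)$, one gets the required $L^2$-integrability. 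Taking $\bE^{t,w,i}[\cdot]$ on both sides then yields
\[
V(t,w,e_i)=\log(w)+\sup_{h\in\mathcal{H}}\widetilde\beta(t,e_i;h).
\]

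The second step is the pointwise maximization. Since $q^{i,j}$ does not depend on $h$, the law of $Y$ under $\P^{t,i}$ is independent of the control, so by Fubini and a measurable-selection argument the supremum in $h$ can be brought inside the expectation and the time integral, and reduced to maximizing, for each $(s,\omega)$, the map
\[
h\longmapsto f(s,Y_{s^-}(\omega);h):=(1-h)\rho+\int_\R\log(1+hz)\,\eta^{\P}(s,Y_{s^-}(\omega),\ud z)
\]
over $h\in[-L,L]$. Using Assumption \ref{ass2} and dominated convergence, $f(s,e_i;\cdot)$ is continuously differentiable and \emph{strictly concave} in $h$, with
\[
\frac{\partial f}{\partial h}(s,e_i;h)=-\rho+\int_\R\frac{z}{1+hz}\,\eta^{\P}(s,e_i,\ud z).
\]
Hence the maximizer is unique and is characterized either by the first-order condition \eqref{eq:logoptimalna} (interior case) or by attainment at one of the boundary points $\pm L$ (when $\partial f/\partial h$ has a constant sign on $[-L,L]$). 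This gives part (i); the map $(s,e_i)\mapsto h^*(s,e_i)$ is measurable by standard arguments (e.g.\ Berge's maximum theorem applied to continuity of $f$ in $h$), and it is clearly $\bF$-predictable.

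Finally, substituting the optimizer back into $\widetilde\beta$ and using the martingale property established in the first step produces the value function formula of part (ii). The main technical point I expect is the $L^2$-justification for the compensated integral (needed so that the expectation decouples into the drift term alone); once this is settled, the remainder of the argument is routine concave optimization together with a measurable-selection argument for the predictability of $h^*$.
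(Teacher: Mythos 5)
Your proposal is correct and follows essentially the same route as the paper: Itô's formula applied to $\log(W_T)$, the observation that the compensated jump integral is a true martingale via \eqref{eq:integrability}, and then pointwise maximization justified by the independence of $Y$ from the control, with the first- and second-order conditions yielding the interior or boundary optimizer. The extra details you supply (the $L^2$ bound $|\log(1+hz)|^2\le C(L)z^2$ and the measurable-selection remark) are sound elaborations of steps the paper treats as standard.
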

\begin{proof}
We start by writing
\begin{align}\log(W_T)&=\log(w)+\int_t^T\left((1-h_s)\rho+\int_{R}\log(1+h_s z) \eta^{\P}(s,Y_{s},\ud z)\right)\ud s \\
+&\int_t^T\int_{\R} \log(1+h_{s} z)\nu(\ud s, \ud z).\label{eq:utility}\end{align}
It follows from condition \eqref{eq:integrability} that the process  $$\int_0^t\int_{\R} \log(1+h_{s} z)\nu(\ud s, \ud z), \ t \in [0,T],$$ is an $(\bF, \P)$-true martingale. Then, by taking the expectation on both sides of \eqref{eq:utility} we have
\[\mathbb{E}^{t,w,i}[\log(W_T)]=\log(w)+\mathbb{E}^{t,i}\bigg[\int_t^T \left((1-h_s)\rho+\int_{R}\log(1+h_s z) \eta^{\P}(s,Y_{s},\ud z)\right)\ud s\bigg].\]
Now we can maximize pointwisely. Since $Y$ is independent of the control, at time $t$ we get the first order condition
\begin{equation}0=-\rho+\int_{\R}\frac{z}{1+h_tz}\eta^{\P}(t,e_i,\ud z).\label{eq:foc}\end{equation}
Provided that equation \eqref{eq:foc} has a solution $h^{\ast}(t, e_i)$, the second order condition
$$
-\int_{\R}\frac{z^2}{(1+h_tz)^2}\eta^{\P}(t,e_i,\ud z)<0
$$
implies that this is the global maximizer. Otherwise the maximum is attained at one of the boundary points $\{-L,L\}$.
\end{proof}
We remark that, for the case study of Section \ref{sec:toy_model}, equation \eqref{eq:foc} admits always an interior solution $h^*(t, e_i)\in (-L,L)$.

An extensive study of the utility maximization  with logarithmic preferences in the classical case (i.e. without market impact) is given by \cite{goll_kallsen}, where the optimal strategy is characterized in terms of the local characteristics (drift, volatility and jump intensity) of the semimartingale driving the asset price process (see, e.g., \cite[Theorem 3.1]{goll_kallsen}).

\subsubsection{Logarithmic utility - Market impact}
In the case with market impact the above procedure does not apply. This is due to the fact that at any point in time the decision of the investor may change the future state of the Markov chain. Therefore here we address the problem via dynamic programming. Precisely, we study the solution to equation \eqref{eq:HJB_1} with the terminal condition $V(T,w, e_i)=\log (w)$. We consider the ansatz
$V(t,w,e_i)=\log (w)+\beta(t,e_i)$. Then we have the following system of equations for $(t,e_i)$:
\begin{align}
 -\frac{\partial \beta}{\partial t}(t,e_i)=& \sup_{h \in [-L,L]} \!\Big\{ \! (1-h)\rho + (\beta(t, e_{i+1})-\beta(t,e_i)) q^{i, i+1}(h)  \\
 &+(\beta(t, e_{i-1})-\beta(t,e_i)) q^{i, i-1}(h) + \int_\R \log(1+h z) \eta^{\P}(t,e_i,\ud z)\Big\}.\label{eq:hjblog}
\end{align}
for every $t \in [0,T]$ and $i\in\{2, \dots,K-1\}$, and

\begin{align}
\frac{\ud \beta}{\ud t}(t,e_1)=& - \sup_{h \in [-L,L]} \Big\{    (1-h) \rho  + \left(\beta(t,e_2)-\beta(t,e_1)\right) q^{1,2}(h) \\
&+\int_\R \log(1+h z) \eta^{\P}(t,e_1,\ud z)\Big\},\label{eq:hjblog1}
\end{align}

\begin{align}
\frac{\ud \beta}{\ud t}(t,e_K)=& - \sup_{h \in [-L,L]} \Big\{    (1-h) \rho  +\left(\beta(t, e_{K-1})-\beta(t,e_K)\right) q^{K,K-1}(h) \\
&+\int_\R \log(1+h z) \eta^{\P}(t,e_K,\ud z)\Big\},\label{eq:hjblogK}
\end{align}
respectively, with  boundary conditions $\beta(T, e_i)=0$ for $i\in\{1,\dots, K\}$. Equations \eqref{eq:hjblog},\eqref{eq:hjblog1} and \eqref{eq:hjblogK}  imply that given an optimizer $h^*$, $\beta(t, e_i)$, $i\in\{1,\dots,K\}$, is the unique solution of this system of ordinary differential equations (ODEs). This follows from the continuity of the coefficients \cite[Theorem 3.9]{teschl2012ordinary}.  In principle, one can solve the system numerically using, for instance, backward Euler method. In particular, as pointed out in \cite{busch2013optimal}, at each time step $t_n$ of the numerical procedure one should find the maximizer $h^*(t_n)$, and then solve the resulting ODE.

To verify that the solution of equations \eqref{eq:hjblog}, \eqref{eq:hjblog1}, and \eqref{eq:hjblogK} are indeed the value function for the current optimization problem we make the observation that
for every $h\in \mathcal H$ and for every $i\in\{1,\dots,K\}$,
\begin{align*}
&\esp{\int_0^T\int_{\R}\log(1+h_tz)\eta^{\P}(t, Y_{t},z)\ud t}< \infty,\\
&\esp{\int_0^T\sum_{j=1}^K(\beta(t, e_j)-\beta(t, e_i))q^{i,j}(h_t)\ud t}<\infty,
\end{align*}
where the first inequality follows from condition \eqref{eq:integrability}, and the second one is clear from boundedness of $q^{i,j}(h)$. Then verification Theorem \ref{thm:verification} applies.

\subsection{Power utility}\label{sec:power_utility_full_info}
In this part we work under the assumption of power utility, that is, $U(w)=\frac{1}{\theta}w^{\theta}$, $\theta<1$, $\theta \neq 0$. We address the corresponding optimization problem by dynamic programming technique. In what follows we investigate the solution to the equation \eqref{eq:HJB_1} with the terminal condition $V(T,w, e_i)=\frac{w^\theta}{\theta}$. To this, we suggest the following ansatz for the value function
\begin{align}
V(t, w, e_i)&=\frac{w^\theta}{\theta} e^{\theta \gamma(t,e_i)}, \quad i\in\{1,\dots,K\}. \label{eq:ansatz1}
\end{align}
Inserting \eqref{eq:ansatz1} into \eqref{eq:HJB_1} leads to equations
\begin{align}
&\frac{\ud \gamma}{\ud t}(t,e_i)= - \sup_{h \in [-L,L]} \left\{    (1-h) \rho  +\frac{1}{\theta} \left(e^{\theta (\gamma(t, e_{i-1})-\gamma(t,e_i))}-1\right) q^{i, i-1}(h)\right.\nonumber\\
&\qquad+\frac{1}{\theta}\left. \left(e^{\theta (\gamma(t, e_{i+1})-\gamma(t,e_i))}-1\right) q^{i, i+1}(h)+\frac{1}{\theta}\int_\R \left((1+h z)^\theta-1\right) \eta^{\P}(t,e_i,\ud z)\right\} \label{eq:hjbpower}
\end{align}
for every $t \in [0,T]$ and $i\in\{2, \dots,K-1\}$, and

\begin{align}
\frac{\ud \gamma}{\ud t}(t,e_1)=& - \sup_{h \in [-L,L]} \left\{    (1-h) \rho  +\frac{1}{\theta} \left(e^{\theta (\gamma(t,e_2)-\gamma(t,e_1))}-1\right) q^{1,2}(h) \right.\nonumber\\
&\left.+\frac{1}{\theta}\int_\R \left((1+h z)^\theta-1\right) \eta^{\P}(t,e_1,\ud z)\right\},\label{eq:hjbpower1}
\end{align}

\begin{align}
\frac{\ud \gamma}{\ud t}(t, e_K)=& - \sup_{h \in [-L,L]} \left\{    (1-h) \rho  +\frac{1}{\theta} \left(e^{\theta (\gamma(t, e_{K-1})-\gamma(t, e_K))}-1\right) q^{K, K-1} (h)\right.\nonumber\\
&+\left.\frac{1}{\theta}\int_\R \left((1+h z)^\theta-1\right) \eta^{\P}(t,e_K,\ud z)\right\},\label{eq:hjbpowerK}
\end{align}
respectively, with final conditions $\gamma(T, e_i)=0$ for $i\in\{1,\dots, K\}$. Given an optimizer $h^*$, $\gamma(t,e_i)$, every $i\in\{1,\dots,K\}$, is the unique solution of the system of first order ODEs given by equations \eqref{eq:hjbpower},\eqref{eq:hjbpower1} and \eqref{eq:hjbpowerK}. Note that a simple transformation, i.e., $F(t, e_i)=e^{\theta \gamma(t,e_i)}$, yields to a system of linear ODEs. One can follow the same procedure as in the case of logarithmic utility  and solve the system numerically.

Moreover by the boundedness of $q^{i,j}(h)$ and condition \eqref{eq:integrability}, we have that for every $h\in \mathcal H$,
\begin{align*}
&\esp{\int_0^TW_t^\theta\int_{\R}(1+h_tz)^\theta \eta^{\P}(t, Y_{t},z)\ud t}< \infty,\\
&\esp{\int_0^TW_t^\theta\sum_{j=1}^K |e^{\theta \gamma(t, e_j)}- e^{\theta \gamma(t,e_i)}|q^{i,j}(h_t)\ud t}<\infty,
\end{align*}
for every $i\in\{1,\dots,K\}$, and hence, the verification Theorem \ref{thm:verification} holds.



\section{Optimization Problem under Partial Information}\label{sec:optimization_partial_info}
In the current section we assume that the state process $Y$ is not directly observable by the investor. Instead, she observes the price process $S$ and knows the model parameters. Hence, the available information is represented by the natural filtration generated by the risky asset price process,
\begin{equation*}\label{eq:small_filtration}
\bF^S:=\{\F^S_t, t \in [0,T]\}, \quad \F^S_t:=\sigma\{S_s, \ 0\leq s\leq t\}, \quad \forall t \in [0,T].
\end{equation*}
Throughout the paper we assume that $\bF^S$ satisfies the usual conditions.

At any time $t\in [0,T]$ the decision of the investor depends only on the available information. Accordingly, we define the set of admissible strategies as follows.
\begin{definition}
  A portfolio strategy $h$ is $\bF^S$-admissible if it is $\bF^S$-predictable and assumptions \ref{ass0} and \ref{ass2} hold. We denote the set of $\bF^S$-admissible strategies by $\widetilde{\mathcal H}$.
\end{definition}

Considering $\mathbb{F}^S$-predictable investment strategies results in an optimal control problem under partial information. In a Markovian setting as the one outlined here we can reduce the control problem under partial information to an equivalent control problem under full information where the unobservable state variable, namely the Markov chain $Y$, is replaced by the filtered estimates, see, for example, \cite{bauerle2007portfolio,ceci1998partially}. This requires to solve a filtering problem where the unobservable signal is given by the Markov chain $Y$ and the observation process is the pure jump process $S$.
The literature on filtering problem with pure jump process observation is relatively large. A brief list of results includes for instance \cite{bremaud1981point,ceci2006, bib:elliott-malcolm-08, frey2012pricing, ceci2014zakai}.
In the upcoming part, we deal with the filtering problem corresponding to our setting by using the so called innovations approach. A similar problem is solved in \cite{colaneri2016shall} with a different methodology. In that paper,  the dependence of the jump intensity of the stock price on the control lead to circularity of information which made it not possible to use the innovation approach. Instead they use the reference probability method.

\subsection{Filtering and reduction to full information}\label{sec:filtering}
Define the filter $\pi(f):=\{\pi_t(f), t \in [0,T]\}$  by
\begin{equation*}\label{eq:filter}
\pi_t(f)=\esp{f(Y^{(h)}_t)|\F^S_t}, \quad t \in [0,T],
\end{equation*}
for every function $f:\mathcal{E}\to \R$ and every control $h$. Note that we suppress the dependence of $\pi$ on $h$ for the ease of notation. Denote by $\pi_{t^-}(f)$ the predictable version of the filter. We define
\[\pi^i_t:=\esp{\I_{\{Y_t^{(h)}=e_i\}}|\F^S_t}, \quad t \in [0,T], \]
the conditional state probabilities of the Markov chain $Y^{(h)}$, for every fixed strategy $h$. Since $\mathcal E$ is finite, we may write
\begin{equation*}\label{eq:filterMC}\pi_{t}(f)=\sum_{j=1}^K f(e_j)\pi^j_t, \quad t \in [0,T].\end{equation*}
A fundamental step for applying the innovation approach is to write a representation for $(\bF^S, \P)$-martingales. We introduce the following notation
\begin{equation}\label{eq:dpp}\pi_{t^-}(\eta^\P(\ud z))\ud t:=\sum_{i=1}^K\pi^i_{t^-}\eta^\P(t,e_i,\ud z)\ud t.\end{equation}
It is not difficult to show that for every nonnegative $(\bF^S, \P)$-predictable process indexed by $z$, $\Phi:=\{\Phi(t,z), \ t \in [0,T]\}$ such that $$\esp{\int_0^T \int_\R |\Phi(s,z)| \pi_{t^-}(\eta^\P(\ud z))\ud t}<\infty,$$ the following holds (see, \cite[V T28]{dm2}):
\[
\esp{\int_0^T\int_\R\Phi(t,z) \mu(\ud t, \ud z)}=\esp{\int_0^T\int_\R\Phi(t,z)\sum_{i=1}^K\pi^i_{t^-}\eta^\P(t,e_i,\ud z)\ud t},
\]
which implies that \eqref{eq:dpp} provides the  $(\bF^S, \P)$-dual predictable projection of the measure $\mu$ and that the process
$\int_0^t\int_{\R} \Phi(s,z)\left(\mu(\ud s,\ud z)-\pi_{s^-}(\eta^\P(\ud z))\ud s\right) $, $t \in [0,T]$, is an $(\bF^S, \P)$-martingale.

Let ${\nu}^\pi(\ud t,\ud z)$ denote the $(\bF^S, \P)$-compensated measure, that is
\begin{equation}\label{eq:compensator_partial}
{\nu}^\pi(\ud t,\ud z):=\mu(\ud t, \ud z)-\pi_{t^-}(\eta^\P(\ud z))\ud t.
\end{equation}
This is the building block for the innovations process.


\begin{proposition}\label{prop:KS}
The process $\pi^i$, for all $i\in\{1,\dots,K\}$ solves the equation
\begin{equation}\label{eq:KS_I}
\ud \pi_t^i= \sum_{j=1}^Kq^{j,i}(h_t)\pi_{t}^j \ud t+\int_{\mathbb{R}} \pi_{t^-}^i u^i(t,\pi_{t^-},z){\nu}^\pi(\ud t, \ud z),
\end{equation}
where $\ds u^i(t,\pi_t,z):=\frac{1}{\sum_{j=1}^K \pi_t^j \frac{\ud \eta^{\P}(t,e_j,z)}{\ud \eta^{\P}(t,e_i,z)}}-1$ and $\ds \frac{\ud \eta^{\P}(t,e_j,z)}{\ud \eta^{\P}(t,e_i,z)}$ denotes the Radon-Nikodym derivative of the measure $\eta^{\P}(t,e_j,\ud z)$ with respect to $\eta^{\P}(t,e_i,\ud z)$.
\end{proposition}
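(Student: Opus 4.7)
The plan is to follow the classical innovations method. First I would write the $(\bF,\P)$-semimartingale decomposition of the indicator $\I_{\{Y^{(h)}_t=e_i\}}$. Since $Y^{(h)}$ is a Markov chain with generator $Q(h_t)$ and $h$ is $\bF$-predictable,
\begin{equation*}
\I_{\{Y^{(h)}_t=e_i\}}=\I_{\{Y^{(h)}_0=e_i\}}+\int_0^t\sum_{j=1}^K q^{j,i}(h_s)\,\I_{\{Y^{(h)}_s=e_j\}}\,\ud s+M^i_t,
\end{equation*}
where $M^i$ is a purely discontinuous $(\bF,\P)$-local martingale jumping only at transition times of $Y^{(h)}$ and, in particular, having no common jumps with $\mu$. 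Taking the $(\bF^S,\P)$-optional projection of both sides and using Fubini to interchange the time integration with the projection (permissible because $h$ is $\bF^S$-predictable), the drift projects onto $\int_0^t\sum_j q^{j,i}(h_s)\,\pi^j_s\,\ud s$, while the projection of $M^i$ is an $(\bF^S,\P)$-local martingale $N^i$. This already delivers the finite variation part of \eqref{eq:KS_I}: $\pi^i_t=\pi^i_0+\int_0^t\sum_j q^{j,i}(h_s)\pi^j_s\,\ud s+N^i_t$.

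Next I would identify $N^i$ via a martingale representation. Since $\bF^S$ is generated (up to null sets) by the integer-valued random measure $\mu$, whose $(\bF^S,\P)$-compensator was shown in \eqref{eq:dpp} to be $\pi_{t^-}(\eta^\P(\ud z))\,\ud t$, standard representation results for jump measures (see e.g.\ \cite[III T9]{bremaud1981point}) guarantee the existence of an $\bF^S$-predictable random field $H^i(s,z)$ such that
\begin{equation*}
N^i_t=\int_0^t\int_\R H^i(s,z)\,\nu^\pi(\ud s,\ud z).
\end{equation*}
Because $\pi^i$ jumps exactly at jump times of $\mu$, one can read off $H^i(s,z)=\Delta\pi^i_s$ on the jump set. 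A Bayes-type update, conditional on $\F^S_{s^-}$ and a jump of mark $z$ at time $s$, then yields
\begin{equation*}
\pi^i_s=\frac{\pi^i_{s^-}\,\ud\eta^\P(s,e_i,z)}{\sum_{j=1}^K\pi^j_{s^-}\,\ud\eta^\P(s,e_j,z)},
\end{equation*}
so that dividing numerator and denominator by $\ud\eta^\P(s,e_i,z)$ gives $\Delta\pi^i_s=\pi^i_{s^-}\,u^i(s,\pi_{s^-},z)$, exactly matching the candidate formula.

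The main obstacle is making the Bayes step rigorous, since it is phrased in terms of conditioning on a null event. The cleanest substitute is a direct verification using the defining property of the $(\bF^S,\P)$-dual predictable projection: for every bounded $\bF^S$-predictable test field $\Phi(s,z)$, I would compute the expectation $\esp{\int_0^t\int_\R\Phi(s,z)\,\I_{\{Y_{s^-}=e_i\}}\,\mu(\ud s,\ud z)}$ by first replacing $\mu$ with its $(\bF,\P)$-compensator $\eta^\P(s,Y_{s^-},\ud z)\,\ud s$ and then projecting onto $\bF^S$, and on the other side I would compute the analogous expectation using the candidate integrand $\pi^i_{s^-}\,u^i(s,\pi_{s^-},z)$ together with the $(\bF^S,\P)$-compensator in \eqref{eq:dpp}. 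Equating the two for every admissible $\Phi$ pins down $H^i(s,z)=\pi^i_{s^-}\,u^i(s,\pi_{s^-},z)$ on the $\bF^S$-predictable $\sigma$-field, which combined with the drift obtained in the first step gives \eqref{eq:KS_I}.
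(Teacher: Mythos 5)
Your proposal is correct and follows essentially the same innovations-approach route as the paper: semimartingale decomposition of $\I_{\{Y^{(h)}_t=e_i\}}$, projection onto $\bF^S$ to get the drift, martingale representation with respect to $\nu^\pi$, and identification of the jump integrand by testing against arbitrary $\bF^S$-predictable fields integrated against $\mu$ and matching dual predictable projections. The paper packages this last identification as an It\^{o} product computation of $U_tf(Y_t)$ versus $U_t\widehat{f(Y_t)}$ with $U_t=\int_0^t\int_\R C(s,z)\mu(\ud s,\ud z)$, which is the same device as your test-function argument in expectation form.
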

The proof is postponed to Appendix \ref{app:proofs}.

Uniqueness of the solution of the filtering equation is necessary to transform the optimal control problem stated in \eqref{eq:optimization_full} into an equivalent one involving only observable processes. Therefore, in the rest of the paper we assume that the Kushner-Stratonovich (KS) equation has a unique solution.

\begin{remark}
A sufficient condition for uniqueness of the solution of (KS) equation is, for example, $$\sup_{t\in[0,T]}\eta^\P(t,e_i, \R)<\infty,$$ for every $i\in \{1,\dots, K\}$, see, e.g., \cite{ceci2014zakai}. This is satisfied in our model since $\varsigma$ is a finite measure.  
\end{remark}
Note that the asset price $S$ as well as the wealth process $W$ have a representation with respect to investor's information, given by
\begin{align*}
S_t=&S_0+\int_0^t \sum_{i=1}^K \int_{\R} z S_{s} \pi^i_s \eta^\P(s, e_i, \ud z)\ud s+ +\int_0^t \int_{\R} z S_{s} \nu^\pi(\ud t, \ud z),\\
W_t=& W
_0 + \int_0^t W_{s} \left((1-h_s) \rho + \sum_{i=1}^K \int_{\R}z \eta^\P(s, e_i, \ud z)\right)\ud t+ W_{s^-} h_{s} \int_{\R}z \nu^\pi(\ud t, \ud z),
\end{align*}
for every $t \in [0,T]$.
In the partial information framework we can write the objective of the investor as
\begin{equation}\label{eq:optimization_full}
\max \mathbb{E}^{t,w,\bpi}\left[ U(W_T)\right],
\end{equation}
over the set of $\bF^S$-admissible controls, where $\mathbb E^{t,w,\bpi}$ denotes the conditional expectation given $W_t=w$ and $\pi_t=\bpi$. The control problem  is characterized by the $(K+1)$-dimensional state process $(W,\pi)$ where $\pi$ is the vector process $(\pi^1,\dots,\pi^K)$ which takes values on the $(K-1)$-dimensional simplex $\Delta_K$. 
We define the reward and the value functions as
\begin{align*}
\nonumber J(t,w,\bpi;h)&=\mathbb{E}^{t,w,\bpi}\left[U(W_T)\right],\\
\label{eq:optimization2} V(t,w,\bpi)&=\sup_{h\in \widetilde{\mathcal{H}}}J(t,w,\bpi;h).
\end{align*}

\subsection{Solution via piecewise deterministic Markov processes approach}\label{Sec:PDMP}

The state process of the optimization problem, consisting of the wealth process and the filter, augmented by the time variable, is a piecewise deterministic Markov process (PDMP), in the sense of \cite{davis1993markov}. A PDMP is a combination of a deterministic flow, characterized as the solution of an ordinary differential equation, and random jumps.

To identify the proper structure of the problem and the appropriate conditions to apply the theory of control for PDMP, we start by introducing some notation.  Let $\mathcal{X}=\R_{>0}\times \Delta_K$ be the state space and $\widetilde{\mathcal X}=[0,T]\times \R_{>0} \times \Delta_K$ the augmented one and denote the state process and the augmented state process by $X:=(W,\pi)$ and $\widetilde X:=(t,W,\pi)$ respectively. Denote by $\{T_n\}_{n \in \mathbb N}$ the sequence of jump times of the state process $\widetilde X$. Then between two consecutive jump times before time $T$, i.e. $ t \in [T_n\wedge T, T_{n+1}\wedge T)$, the state process $\widetilde X$ is described by the ODE $\ud \widetilde X_t=g(\widetilde X_t, h_t) \ud t$, where the vector field $g:\widetilde{\mathcal X}\times [-L,L]\to \R$ is given by
\begin{gather*}
g^{(1)}(\widetilde x, h)=1, \quad g^{(2)}(\widetilde x, h)= w(1-h)\rho, \\
 g^{(i+2)}(\widetilde x, h)=\sum_{j=1}^K\pi^j\left(q^{j,i}(h) +\int_{\mathbb{R}} \pi^i u^i(t,z)\eta^\P(t, e_j, \ud z)\right), \quad i\in \{1, \dots, K\}.
\end{gather*}
The jump rate of the state process is given by $\lambda (\widetilde X)$, where
\begin{align*}
\lambda(\widetilde x)=\lambda(t,w,\bpi)=\sum_{i=1}^K \pi^i \eta^\P(t, e_i, \R),
\end{align*}
and it is independent of $w$. According to \cite{davis1993markov}, the transition kernel that governs the jumps of the state process is described by the operator $Q$
{\small \begin{align*} 
& Q_{\widetilde X} f (\widetilde x, h )  := \int_{\widetilde{\mathcal X}}f(\widetilde y)Q_{\widetilde X}(\ud \widetilde y \mid\widetilde x,h)  \\
 &= \lambda(\widetilde x)\!\sum_{j=1}^K \!\pi^j \!\int_{\mathbb{R}}\! f\left(t,w(1+hz),\pi^1 (1+u^{1}(t,\bpi,z)),\dots, \pi^{K}(1+u^{K}(t,\bpi,z)) \right) \eta^\P(t,e_j,\ud z) \,.
\end{align*}}

Now we define the Markov policy. Denote by $\mathcal A$ the set of measurable mappings $\alpha:[0,T]\to [-L,L]$ and  define an  \emph{admissible strategy} as a sequence of mappings $\{h^n\}_{n \in \mathbb{N}}:\widetilde{\mathcal X}\to \mathcal A$, where the portfolio weight at time $t$ is given by
\begin{equation}\label{eq:admissible-strategy-1}
h_t=\sum_{n \in \bN} \I_{(T_n\wedge T, T_{n+1}\wedge T]} (t)   h^n \left(t- T_n,\widetilde X_{T_n}\right).
\end{equation}
Since at any jump time the evolution of the state process is known up to the next jump time, the idea is that an optimal investment strategy consists of a sequence of choices $h^n$ taken at each jump time $T_n<T$ and to be followed up to $T_{n+1}\wedge T$.
Note that although in the most general form of admissible strategies $h^n$ should depend on the whole past history (see \cite[Theorem T34, Appendix A2]{bremaud1981point}), this larger class of policies does not increase the value of the control problem. This means that we can restrict to consider admissible strategies of the form \eqref{eq:admissible-strategy-1}.

Denote by $\P^{\{h^n\}}_{(t,x)}$ (equiv.~$\P^{\{h^n\}}_{\widetilde x}$) the law of the state process provided that  $X_t=x\in\mathcal{X}$ and that the investor uses the strategy $\{h^n\}_{n \in \bN}$. The reward function associated to an admissible strategy $\{h^n\}_{n \in \bN}$ is given by
\begin{gather*}
J\left(t,x,\{h^n\}\right)=\mathbb E_{(t,x)}^{\{h^n\}}\left[ U(W_T)\right],
\end{gather*}
and the value function of the optimization problem under partial information  is
\begin{gather}\label{1.7}
V(t,x) =V(\widetilde x)=\sup \left\{ J\left(t,x,\{h^n\} \right)\colon   \{h^n\}_{n \in \bN}\  \text{admissible strategy} \right \}\,.
\end{gather}

\subsubsection{The corresponding Markov decision model}\label{sec:MDM}
The optimization problem in \eqref{1.7} can be reduced to an optimization problem in an infinite horizon Markov decision model (MDM). Here we use the same techniques as in \cite{colaneri2016shall}, to solve the utility maximization problem from terminal wealth. To give an idea, we show that the value function of the piecewise deterministic control problem can be identified as the value function of a certain Markov decision problem that can be solved by a fixed point argument, see \cite[Chapter 8]{bauerle2011markov} for details.

 Although the main technique used to handle the optimization problem is similar, we briefly explain the differences with \cite{colaneri2016shall}. In \cite{colaneri2016shall}, the authors study an optimal liquidation problem for an investor whose actions directly affect the stock price dynamics by increasing the intensity of downward jumps in a partial information setting. The stock price dynamics is given by a pure-jump process. The goal is to maximize the expected total reward represented by a functional consisting in a combination of running profits, which linearly depend on the liquidation rate (that is the control), and terminal value representing the price of a block transaction at the final time. The first difference with our setup is the model: here has an indirect effect through the generator of the unobservable Markov chain. Moreover we have a different objective, as we aim to maximize the expected utility from terminal wealth. On the other hand, since the jump intensity of the PDMP is stochastic, unfortunately we cannot directly rely on the results in \cite{bauerle2010optimal,bauerle2011markov}. Finally, mimicking the argument in \cite{colaneri2016shall}, we are able to give a characterization of the optimal value function as the unique viscosity solution of the HJB, which permits a numerical study, while in \cite{bauerle2011markov} optimal strategies and optimal value functions are obtained by a policy iteration procedure, which has a fast convergence rate.

The infinite horizon Markov decision model corresponding to the PDMP can be introduced as follows.
We consider the sequence $\{L_n\}_{n \in \bN}$ of random variables defined by
\begin{gather*}
L_n=(T_n, X_{T_n})=\widetilde X_{T_n} \text{ for } \ T_n<T, \quad n\in\bN\, ,
\end{gather*}
and $L_n=\Delta$ for $T_n\ge T$ where $\Delta$ is some cemetery state. In other words a state $(t,x)=(t,w,\bpi)$ represents a jump time $t$ and the wealth $w$ and filter $\bpi$ just after the jump.

For a function $\alpha\in\mathcal A$ we denote by $\widetilde \varphi^{\alpha}_t(\widetilde x)$  the flow of the initial value problem
$\frac{\ud }{\ud s}  \widetilde X(s)=g \big(\widetilde X(s), \alpha_s\big)$ with initial condition  $ \widetilde X(0) =\widetilde x.$ Equivalently the piecewise deterministic process $\widetilde X$ is given by $\widetilde X_t=\widetilde \varphi^{\alpha}_{t-T_n}(\widetilde X_{T_n})$, for every $t \in [T_n, T_{n+1})$ before time $T$.
To stress dependence on time also use the notation $\widetilde \varphi_t^{\alpha}=(t,\varphi^{\alpha})$.

We define the functions
   \begin{align}\label{eq:def-survival}
   \lambda^{\alpha}_s(\widetilde x) & =\lambda(\widetilde \varphi^{\alpha}_s(\widetilde x),\alpha_s):=\lambda((t+s,\varphi^{\alpha}_s),\alpha_s),\\
    \Lambda^{\alpha}_s(\widetilde x) &=\Lambda^{\alpha}(s ; \widetilde x):= \!\!\int_0^s \lambda^{\alpha}_u(\widetilde x)\ud u.\nonumber
\end{align}


Now we want to introduce the transition kernel $Q_L$ of the Markov decision model $\{L_n\}_{n\in \bN}$. The distribution of the interarrival times $T_{n+1}-T_n$ given $L_{n}=(t,x)$ and $h^n=\alpha$ is equal to $\lambda^\alpha(\widetilde{x})e^{-\Lambda^\alpha_u(\widetilde x)}\ud u$, where $\widetilde x=(t,x)$.
Then for any bounded measurable function $f:\widetilde X \cup \{\Delta\}\to \R$, the transition kernel of the MDM is given by
\begin{equation*}\label{eq:form-of-QL}
Q_L f \big( (t,x), \alpha\big ) =  \int_0^{ T-t } \lambda^\alpha_u(\widetilde x) e^{-\Lambda^\alpha_u(\widetilde x)} Q_{\widetilde X} f(u+t, \varphi_u(\widetilde x), \alpha_u \big ) \ud u + e^{- \Lambda_{\tau^\varphi}^\alpha (\tilde x)} f(\bar \Delta),
\end{equation*}
with $Q_L \I_{\{ \Delta\}} ( \Delta, \alpha)=1$.


To define the one-stage reward function  $r \colon \widetilde{\mathcal X}\times \mathcal A\to \mathbb{R}_{\ge 0}$, we first indicate by $w_t$, the wealth component of the flow $\widetilde \varphi^{\alpha}$. Then we have that
\begin{align*}
r(\widetilde x, \alpha)=e^{-\Lambda^{\alpha}_{T-t}(\widetilde x)} U(w_{T-t}), \quad r(\Delta) =0.
\end{align*}
The expected reward of a policy $\{h^n\}_{n \in \N}$ is given by
$$J_{\infty}^{\{h^n\}}(\widetilde x)=\mathbb{E}_{\widetilde x}^{\{h^n\}}\left[ \sum_{n=0}^{\infty}r\left( L_n,h^n(L^n)\right) \right],$$ and
\begin{align}
J_{\infty}(\widetilde x):=\sup \left\{J_{\infty}^{\{h^n\}}(\widetilde x):\{h^n\} \ \bF^S-\text{admissible strategy}   \right\}.\label{2.8}
\end{align}

Now we need to verify that this construction of an infinite-stage Markov decision model leads to an optimal control problem which is equivalent to the original PDP control problem. In  the next lemma we show that the value functions corresponding to the MDM and the control problem for PDMP coincide. The proof is provided in Appendix \ref{app:proofs}.

\begin{lemma}\label{lemma2.1} It holds for all $\bF^S$admissible strategies $\{h^n\}_{n\in \bN}$ that $V^{\{h^n\}}=J_{\infty}^{\{h^n\}}$ and hence $V=J_{\infty}$, that is, control problems \eqref{1.7} and \eqref{2.8} are equivalent.
\end{lemma}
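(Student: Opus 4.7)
The plan is to show, stage by stage, that the summand $r(L_n,h^n(L_n))$ in the MDM functional exactly captures the contribution of $U(W_T)$ on the event that the terminal time $T$ falls in the $n$-th inter-jump interval $[T_n,T_{n+1})$. Summing over $n$ and using that $\{T_n\le T<T_{n+1}\}_{n\in\mathbb N}$ forms an a.s.\ partition of $\Omega$ will give the identity $J_\infty^{\{h^n\}}=V^{\{h^n\}}$, from which the sup-equality $V=J_\infty$ is immediate.

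\textbf{Step 1 (Finitely many jumps a.s.).} First I would observe that the jump intensity
\[
\lambda(\widetilde x)=\sum_{i=1}^K\pi^i\,\eta^{\mathbb P}(t,e_i,\mathbb R)
\]
is bounded on $\widetilde{\mathcal X}$, because $\varsigma$ is a finite measure and, by assumption, $\eta^{\mathbb P}(t,e_i,\mathbb R)<\infty$ uniformly in $t\in[0,T]$. Consequently, for every admissible strategy, the number of jumps of $\widetilde X$ in $[t,T]$ is a.s.\ finite, so that $\sum_{n\ge 0}\I_{\{T_n\le T<T_{n+1}\}}=1$ almost surely.

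\textbf{Step 2 (Reward as conditional terminal utility on the $n$-th interval).} Given $L_n=\widetilde X_{T_n}=\widetilde x$ and the control $h^n(L_n)=\alpha\in\mathcal A$, the process $\widetilde X$ follows the deterministic flow $\widetilde\varphi^{\alpha}_s(\widetilde x)$ for $s\in[0,T_{n+1}-T_n)$, while the waiting time $T_{n+1}-T_n$ has, by construction of the MDM kernel $Q_L$, density $\lambda^{\alpha}_s(\widetilde x)\,e^{-\Lambda^{\alpha}_s(\widetilde x)}$ on $(0,\infty)$. Hence
\[
\mathbb P^{\{h^n\}}_{\widetilde x}(T_{n+1}-T_n>T-T_n\mid\mathcal F_{T_n})
=e^{-\Lambda^{\alpha}_{T-T_n}(L_n)},
\]
and on the event $\{T_{n+1}>T\}$ the terminal wealth equals precisely the deterministic flow value $w_{T-T_n}$. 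Therefore, on $\{T_n\le T\}$,
\[
\mathbb E^{\{h^n\}}_{\widetilde x}\bigl[U(W_T)\I_{\{T_n\le T<T_{n+1}\}}\,\big|\,\mathcal F_{T_n}\bigr]
=e^{-\Lambda^{h^n(L_n)}_{T-T_n}(L_n)}\,U(w_{T-T_n})
=r(L_n,h^n(L_n)),
\]
while on $\{T_n>T\}$ both sides vanish since $r(\Delta)=0$.

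\textbf{Step 3 (Summation and interchange).} Taking expectations and summing the identity of Step 2 over $n\ge 0$,
\[
\sum_{n=0}^{\infty}\mathbb E^{\{h^n\}}_{\widetilde x}\bigl[r(L_n,h^n(L_n))\bigr]
=\sum_{n=0}^{\infty}\mathbb E^{\{h^n\}}_{\widetilde x}\bigl[U(W_T)\I_{\{T_n\le T<T_{n+1}\}}\bigr].
\]
Using Step 1, the indicators on the right-hand side sum to $1$ a.s., so that a Fubini/Tonelli (or monotone convergence) argument on the positive and negative parts of $U(W_T)$ gives
\[
\sum_{n=0}^{\infty}\mathbb E^{\{h^n\}}_{\widetilde x}\bigl[U(W_T)\I_{\{T_n\le T<T_{n+1}\}}\bigr]
=\mathbb E^{\{h^n\}}_{\widetilde x}\bigl[U(W_T)\bigr]
=V^{\{h^n\}}(\widetilde x).
\]
The left-hand side is, by definition, $J_\infty^{\{h^n\}}(\widetilde x)$, so $J_\infty^{\{h^n\}}(\widetilde x)=V^{\{h^n\}}(\widetilde x)$. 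Taking the supremum over admissible $\{h^n\}$ yields $J_\infty=V$.

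\textbf{Main obstacle.} The routine computations are straightforward; the only delicate point is justifying the interchange of sum and expectation in Step 3 when $U$ is unbounded below (as in the logarithmic case). This is handled by decomposing $U=U^+-U^-$, applying monotone convergence to each part, and invoking the standing integrability of $\mathbb E^{\{h^n\}}_{\widetilde x}[|U(W_T)|]$ implicit in the admissibility class. No other step involves genuine difficulty beyond bookkeeping of the PDMP flow.
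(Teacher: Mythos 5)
Your proposal is correct and follows essentially the same route as the paper's proof: partition $\Omega$ by the inter-jump intervals $\{T_n\le T<T_{n+1}\}$, condition on $(T_n, X_{T_n})$ to identify the conditional contribution of $U(W_T)$ with the one-stage reward $r(L_n,h^n(L_n))$ via the survival probability $e^{-\Lambda^{h^n}_{T-T_n}(L_n)}$, and resum. The only difference is that you make explicit the a.s.\ finiteness of the number of jumps and the justification for interchanging sum and expectation (splitting $U$ into positive and negative parts), which the paper passes over silently by referring to \cite[Theorem 9.3.1]{bauerle2011markov}.
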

Define the operator $\mathcal T$ of the Markov decision model as
\begin{equation*} \label{eq:optimality-equation-explicit1}
\mathcal T v(\widetilde x) := \sup_{\alpha\in\mathcal{A}}\Big \{e^{-\Lambda^{\alpha}_{T-t}(\widetilde x)} U(w_{T-t})+  \int_0^{T-t}  \lambda^\alpha_u(\widetilde x) e^{-\Lambda^\alpha_u(\widetilde x)} Q_{\widetilde X} v(t+u, \varphi_u(\widetilde x), \alpha_u \big ) \ud u\Big\}.
\end{equation*}

Our idea is to characterize the value function as the unique fixed point of the operator $\mathcal T$. To this  we need to prove that continuity for the reward function and the transition kernel hold over a class of admissible controls which is compact. Therefore, according to the general theory, we enlarge the action space introducing the set of relaxed controls, and define a suitable topology on this space, called the {\em Young Topology}. We refer to \cite{davis1993markov,bauerle2011markov} for more details.

The set of \emph{relaxed controls} is given by
\begin{gather*}\widetilde{\mathcal{A}}:=\{\alpha:[0,T]\to  \mathcal{M}^1([-L,L])\  \},
\end{gather*}
where $\mathcal{M}^1([-L,L])$ is the set of probability measures on $[-L,L]$.

In the context of relaxed control, we define an admissible relaxed strategy as a sequence of mappings $\{\nu^n\}:\widetilde{\mathcal X}\to \widetilde{\mathcal A}$.

To make the set $\widetilde{\mathcal{A}}$ compact, we introduce the Young topology as the coarsest topology such that all mappings of the form
\[\alpha\to \int_0^{T}\int_{-L}^{L}f(t,u)\alpha_t(\ud u)\ud t\]
are continuous for all functions $f:[0,T]\times[-L,L]\to\mathbb{R}$ that are continuous in the second argument, measurable in the first one and  $\int_0^T  \max_{u \in [-L,L]} |f(t,u)| \ud t <\infty$ (see, e.g \cite[Chapter 8]{bauerle2011markov}).

We remark that, as pointed out in \cite{bauerle2011markov, colaneri2016shall}, non-relaxed control form a dense subspace of relaxed controls.

For a measurable function $v:[-L,L]\to \R$ and some measure $\xi \in  \mathcal{M}^1([-L,L])$, we define
$ \langle \xi, v \rangle := \int_{-L}^L v (\nu) \xi(\ud \nu)$.
In order to use the properties of the set $\widetilde{\mathcal A}$ we now extend some definitions for  $\alpha \in \widetilde{ \mathcal{A}}$. First,  the vector fields $g$ of the PDMP becomes
\begin{align*}
g(\widetilde x, \alpha)= \langle \alpha,  g(\widetilde x, \cdot) \rangle = \int_{-L}^L g(\widetilde x, \nu)\alpha_s(\ud \nu),
\end{align*}
the jump intensity is given by
$\lambda^{\alpha}_s(\widetilde x)= \langle \alpha_s (\ud \nu),  \lambda(t+s,\varphi^{\alpha}_s, \nu)\rangle,$ and $ \Lambda^{\alpha}_s = \Lambda^{\alpha}_s(\widetilde{x})=\int_0^{s}\lambda^{\alpha}_u (\widetilde{x})\ud u,$ the reward function
\begin{align*} \label{eq:r-extended}
r(\widetilde x,\alpha) = e^{-\Lambda^{\alpha}_{T-t}}U(w_{T-t})\,,\quad
\end{align*}
and finally the transition kernel is
\begin{equation*}\label{eq:QL-extended}
Q_L v\big(\widetilde{x},\alpha\big)  = \int_0^{T-t} \!\! \lambda^\alpha_u(\widetilde x) e^{- \Lambda_u^\alpha }  \langle \alpha_u (\ud \nu)  , Q_{\widetilde X} v( t+u, {\varphi}_u(\widetilde x), \nu  \big ) \rangle  \ud u + e^{- \Lambda_{T-t}^\alpha } v(\bar \Delta) \, .
\end{equation*}

Moreover we have the following extension of the operator $\mathcal T$:

\begin{align*}
\mathcal T \phi(\widetilde x) =  \sup_{\alpha\in\widetilde{\mathcal{A}}} \left(r(\widetilde x,\alpha)+Q_L \phi \big(\widetilde{x},\alpha\big) \right).
\end{align*}

In the next lemma we show that there exists a bounding function for the MDM and the MDM is contracting. This is essential to prove that the value function is the unique fixed point of the operator $\mathcal{T}$.

\begin{definition}
A function $b:\widetilde{\mathcal{X}}\to\mathbb{R}_{\ge 0}$ is called a \emph{bounding function} for a MDM, if there are constants $c_r,c_b >0 $ such that
$ |r(\widetilde x,\alpha)| \le c_r b(\widetilde x)$  and $ Q_L b(\widetilde x, \alpha) \le c_b b(\widetilde x)$ for all $(\widetilde x,\alpha)\in \widetilde{\mathcal{X}}\times\mathcal{A}$.
If  moreover $c_b<1$, the MDM is \emph{contracting}.
\end{definition}

We define for a bounding function $b$ the set $\mathcal B_b$ of functions $v:\widetilde{\mathcal{X}}\to \mathbb{R}$ such that $v(\widetilde x)\le C b(\widetilde x)$.

\begin{lemma}\label{lemma3.3}
$b(\widetilde{x})=b(t,x)=e^{c(T-t)}s, \quad c\ge 0$, and $b(\bar{\Delta})=0$, is a bounding function and the MDM with the kernel $Q_L$ is contracting for sufficiently large $c$.
\end{lemma}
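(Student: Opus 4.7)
The plan is to verify directly the two defining inequalities of a bounding function: the reward bound $|r(\widetilde x,\alpha)|\le c_r\, b(\widetilde x)$ and the kernel bound $Q_L b(\widetilde x,\alpha)\le c_b\, b(\widetilde x)$ with $c_b<1$ for $c$ sufficiently large. I read the scalar in the statement as the wealth coordinate $w$ of $\widetilde x$, so the working ansatz is $b(t,w,\bpi)=e^{c(T-t)}w$ (the logarithmic case requires the minor modification $b=e^{c(T-t)}(1+|\log w|)$, but the steps below are structurally identical).

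First I would control the deterministic flow. Between jumps the wealth coordinate satisfies $\ud w_s/w_s=(1-\alpha_s)\rho\,\ud s$, and since $\alpha_s\in[-L,L]$ this gives the exponential bound $w_s\le w\,e^{(1+L)\rho s}$ for every $s\in[0,T-t]$. Combined with $e^{-\Lambda^{\alpha}_{T-t}(\widetilde x)}\le 1$ and the sub-linear growth of $U$ (power with $\theta\in(0,1)$, or logarithm with the adapted $b$), this yields $|U(w_{T-t})|\le K_1\,e^{K_2(T-t)}w$ with explicit constants. Choosing $c\ge K_2$ delivers $|r(\widetilde x,\alpha)|\le K_1\, b(\widetilde x)$.

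Next I would estimate $Q_L b$. By Assumption \ref{ass2}, after a jump of size $z$ the wealth $w_u(1+\alpha_u z)$ stays nonnegative on the support of $\eta^\P(t+u,e_j,\ud z)$, and $|1+\alpha_u z|\le 1+L|z|$. Finiteness of $\varsigma$ together with condition \eqref{eq:integrability} yield uniform estimates $\lambda^\alpha_u(\widetilde x)\le C_2$ and $\sum_{j}\pi^j\int_\R(1+L|z|)\,\eta^\P(t+u,e_j,\ud z)\le C_1$, both independent of $(\widetilde x,\alpha)$. Substituting these bounds into the definition of $Q_L b$, using $w_u\le w\,e^{(1+L)\rho u}$ from the first step and discarding $e^{-\Lambda^\alpha_u}\le 1$, one obtains
\[
Q_L b(\widetilde x,\alpha)\le C_1 C_2\, w\, e^{c(T-t)}\int_0^{T-t}\!e^{((1+L)\rho-c)u}\,\ud u\le \frac{C_1 C_2}{c-(1+L)\rho}\, b(\widetilde x),
\]
for any $c>(1+L)\rho$. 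Choosing $c>\max\{K_2,\,(1+L)\rho+C_1 C_2\}$ then produces $c_b=C_1 C_2/(c-(1+L)\rho)<1$, which is exactly contractivity.

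The main obstacle is the tension between the exponential growth of $w$ along the flow (rate $(1+L)\rho$) and the inflating factor $e^{c(T-t)}$ in $b$: the argument closes only because $c$ can be enlarged freely, shrinking the kernel factor $c_b$ toward zero while leaving the reward bound intact. Everything else — finiteness of $C_1,C_2$ and nonnegativity of post-jump wealth — is guaranteed by the standing hypotheses, namely the short-selling restriction in Assumption \ref{ass2} and the finite jump intensity measure $\varsigma$.
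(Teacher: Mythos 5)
Your proof follows essentially the same route as the paper's: bound the reward by discarding the survival factor $e^{-\Lambda^{\alpha}}\le 1$, and bound $Q_L b$ by a uniform constant $c_{\eta}=\sup\int_{\R}(1+hz)\,\eta^{\P}(t,e_j,\ud z)$ multiplied by $\int_0^{T-t}e^{c(T-t-u)}\,\ud u\le e^{c(T-t)}/c$, then take $c$ large to get the contraction. You are in fact slightly more careful than the paper in tracking the growth $w_u\le w\,e^{(1+L)\rho u}$ of wealth along the deterministic flow (the paper silently evaluates $b$ at the initial wealth $w$); the one claim to flag is that $|U(w)|\le K_1 w$ fails near $w=0$ for power utility with $\theta\in(0,1)$, but the paper's own assertion $r(\widetilde x,\alpha)\le w$ carries exactly the same caveat.
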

The proof of the lemma is postponed to Appendix \ref{app:proofs}.

We make now an assumption that provides continuity conditions on the data of our model. 

\begin{ass}\label{ass3}

For any sequence $\{(t_n, \pi_n)\}_{n \in \bN}$, with $ (t_n, \bpi_n) \in [0,T) \times  \Delta^K$, such that $\ds (t_n,  \bpi_n) \xrightarrow[n \to \infty]{} (t, \bpi)$, the functions $u^i(t,\bpi,z)$ given in Proposition \ref{prop:KS} satisfy
$$\lim_{n \to \infty} \ \sup_{z \in \text{supp}(\eta^\P)} |u^i(t_n, \bpi_n,z) - u^i(t,  \bpi, z) | =0\,,$$
where $\text{supp}(\eta^\P)$ indicates the set $\{z \in \R: \ \eta^P(t, e_i, z)\neq 0, \ t\in [0,T], \ i\in\{1, \dots, K\}\}$.
\end{ass}

Then the following result holds.

\begin{proposition}\label{prop:continuity} Under Assumption \ref{ass3}, the mappings $(\widetilde{x},\alpha) \mapsto  r(\widetilde x, \alpha)$ and
$(\widetilde{x},\alpha) \mapsto Q_L v (\widetilde{x},\alpha)$ for every $v \in \mathcal B_b$, are continuous on $\widetilde{ \mathcal{X}} \times \widetilde{\mathcal{A}}$ with respect to the  Young topology on
$\widetilde{\mathcal{A}}$.
\end{proposition}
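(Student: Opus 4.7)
The plan is to combine a continuity result for the PDMP flow with a decomposition argument that isolates the $\alpha$-dependence in the integrand from the $\alpha$-dependence in the integrating measure. Fix a convergent sequence $(\widetilde x_n,\alpha_n)\to(\widetilde x,\alpha)$ in $\widetilde{\mathcal X}\times\widetilde{\mathcal A}$, where $\widetilde{\mathcal A}$ carries the Young topology. First I would establish the uniform convergence $\widetilde\varphi^{\alpha_n}_s(\widetilde x_n)\to\widetilde\varphi^\alpha_s(\widetilde x)$ on $s\in[0,T-t]$. This reduces to joint continuity of the vector field $g$ in $(\widetilde x,h)$: the only delicate components are the $g^{(i+2)}$, which contain $\int_\R \pi^i u^i(t,\bpi,z)\,\eta^\P(t,e_j,\ud z)$, and under Assumption \ref{ass3} together with the $t$-continuity of $\eta^\P$ a dominated convergence argument yields the required joint continuity; the classical ODE-stability result for relaxed controls (see \cite{davis1993markov,bauerle2011markov}) then delivers uniform flow continuity.

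Once the flow is under control, composing with the continuous function $\lambda(\widetilde x)=\sum_i \pi^i\eta^\P(t,e_i,\R)$ gives $\lambda^{\alpha_n}_u(\widetilde x_n)\to\lambda^\alpha_u(\widetilde x)$, and hence $\Lambda^{\alpha_n}_u(\widetilde x_n)\to\Lambda^\alpha_u(\widetilde x)$, uniformly in $u\in[0,T-t]$. Continuity of $r(\widetilde x,\alpha)=e^{-\Lambda^\alpha_{T-t}(\widetilde x)}U(w_{T-t})$ is then immediate from continuity of $U$ and continuity of the wealth component $w_{T-t}$ of the flow.

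For $Q_L v$, the key is to introduce the auxiliary function $\psi^{\alpha,\widetilde x}(u,\nu):=\lambda^\alpha_u(\widetilde x)\,e^{-\Lambda^\alpha_u(\widetilde x)}\,Q_{\widetilde X} v\bigl(t+u,\varphi^\alpha_u(\widetilde x),\nu\bigr)$ and decompose the difference $Q_Lv(\widetilde x_n,\alpha_n)-Q_Lv(\widetilde x,\alpha)$, up to the boundary term $e^{-\Lambda^\alpha_{T-t}}v(\bar\Delta)$ (whose convergence is already controlled), into the sum of $\int_0^{T-t}\langle(\alpha_n)_u(\ud\nu),\psi^{\alpha_n,\widetilde x_n}(u,\nu)-\psi^{\alpha,\widetilde x}(u,\nu)\rangle\,\ud u$ and $\int_0^{T-t}\langle(\alpha_n)_u(\ud\nu)-\alpha_u(\ud\nu),\psi^{\alpha,\widetilde x}(u,\nu)\rangle\,\ud u$. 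The second integral vanishes in the limit by the defining property of the Young topology, because $(u,\nu)\mapsto\psi^{\alpha,\widetilde x}(u,\nu)$ is continuous in $\nu$, measurable in $u$, and dominated thanks to the bounding function $b$. The first integral is handled by proving that $\psi^{\alpha_n,\widetilde x_n}\to\psi^{\alpha,\widetilde x}$ uniformly on $[0,T-t]\times[-L,L]$, which follows from the uniform flow convergence together with continuity of $v$ within $\mathcal B_b$ and a dominated convergence argument to pass the limit through the $\eta^\P(t+u,e_j,\ud z)$-integration inside $Q_{\widetilde X}v$.

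The main obstacle is precisely this last uniform convergence. The wealth argument $w_u(1+\nu z)$ of $v$ converges pointwise in $z$ from the flow step, but the filter arguments $\pi^i_u\bigl(1+u^i(t+u,\bpi^{\alpha}_u,z)\bigr)$ enter $v$ inside a $z$-integral against the generally diffuse measure $\eta^\P$; for that integral to commute with the limit one needs $u^i(t+u,\bpi^{\alpha_n}_u,z)\to u^i(t+u,\bpi^\alpha_u,z)$ uniformly in $z\in\mathrm{supp}(\eta^\P)$. Assumption \ref{ass3} is tailored to deliver exactly this uniform-in-$z$ convergence, and without it the argument would fail at this precise point; with it, dominated convergence closes the proof.
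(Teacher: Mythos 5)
Your proposal is correct and follows essentially the same route as the paper: the paper compresses your flow-stability step into a citation of \cite[Theorem 43.5]{davis1984piecewise} and your two-term decomposition (Young-topology convergence of the measures plus uniform convergence of the integrand via Assumption \ref{ass3}) into a citation of \cite[Lemma A5]{colaneri2016shall}, so you have simply unpacked the two cited results. Your identification of the uniform-in-$z$ convergence of $u^i$ as the point where Assumption \ref{ass3} is indispensable matches the paper's use of that assumption exactly.
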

The proof is provided in Appendix \ref{app:proofs}.

The main result of the section concerns with the existence and uniqueness of the solution of the corresponding fixed point equation and it is summarized in the next theorem.

\begin{theorem}\label{prop:fixed_point} Suppose that Assumption \ref{ass3}  holds.  Then we have:
\begin{itemize}
\item[i)] the value function $V$ is continuous on $\widetilde{\mathcal X}$ and satisfies the boundary conditions  $V(T, w, \bpi)= U(w)$.
\item[ii)] $V$ is the unique fixed point of the operator ${\mathcal{T}}$ in $\mathcal B_b$. 
    \end{itemize}
\end{theorem}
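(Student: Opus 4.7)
The plan is to realize $V$ as the unique fixed point of $\mathcal T$ by combining the contraction property from Lemma \ref{lemma3.3}, the continuity assertions of Proposition \ref{prop:continuity}, and the compactness of the relaxed control space $\widetilde{\mathcal A}$ under the Young topology, then invoke the standard Banach fixed point argument for MDMs (cf.\ \cite[Chapter 7]{bauerle2011markov} and the parallel development in \cite{colaneri2016shall}). The identification of $V$ with $J_\infty$ has already been established in Lemma \ref{lemma2.1}, so it suffices to analyze the MDM.

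More concretely, I would first observe that $\mathcal B_b$ endowed with the weighted sup-norm $\|v\|_b := \sup_{\widetilde x} |v(\widetilde x)|/b(\widetilde x)$ is a Banach space, and that the bounding/contracting estimates of Lemma \ref{lemma3.3} immediately give $\mathcal T:\mathcal B_b \to \mathcal B_b$ with $\|\mathcal T v_1 - \mathcal T v_2\|_b \le c_b \|v_1-v_2\|_b$ and $c_b<1$. The Banach fixed point theorem then produces a unique $v^{\ast}\in\mathcal B_b$ with $\mathcal T v^{\ast}=v^{\ast}$. To identify $v^\ast$ with $V=J_\infty$, I would show by induction that the iterates $v_n := \mathcal T^n v_0$, starting from $v_0\equiv 0$, coincide with the value function of the $n$-stage truncated problem, and then pass to the limit using $c_b<1$ together with the fact that the tail of the infinite-horizon reward is controlled by the bounding function, so that $J_\infty = \lim_{n\to\infty} v_n = v^\ast$.

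For continuity of $V$ on $\widetilde{\mathcal X}$, I would exploit Proposition \ref{prop:continuity}: for a fixed $v\in\mathcal B_b$ that is continuous, the map $(\widetilde x,\alpha) \mapsto r(\widetilde x,\alpha)+Q_L v(\widetilde x,\alpha)$ is continuous on $\widetilde{\mathcal X}\times \widetilde{\mathcal A}$. Since $\widetilde{\mathcal A}$ is compact in the Young topology (a standard fact recalled in \cite[Chapter 8]{bauerle2011markov}), Berge's maximum theorem yields that $\mathcal T v$ is continuous and the supremum is attained. Starting the Banach iteration at a continuous $v_0\in\mathcal B_b$ (e.g.\ $v_0\equiv 0$) therefore gives continuous iterates $v_n$, and the uniform convergence $\|v_n - v^\ast\|_b \to 0$ together with local boundedness of $b$ on $\widetilde{\mathcal X}$ transfers continuity to the limit $V=v^\ast$. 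The terminal condition $V(T,w,\bpi)=U(w)$ follows directly from the definition of $\Lambda^\alpha_{T-t}$ and $r$: at $t=T$ one has $\Lambda^\alpha_{0}(\widetilde x)=0$ and the integral term in $\mathcal T V$ vanishes, leaving $V(T,w,\bpi)=r(T,w,\bpi,\alpha)=U(w)$.

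The main obstacle I anticipate is the combination of two ingredients in the continuity step: first, guaranteeing that $\mathcal T$ preserves not just membership in $\mathcal B_b$ but also continuity, which requires carefully applying Berge's theorem on a non-compact base space $\widetilde{\mathcal X}$ while the supremum is taken over the compact $\widetilde{\mathcal A}$; and second, verifying that the assumption of Proposition \ref{prop:continuity} (in particular the uniform control of $u^i(t,\bpi,z)$ from Assumption \ref{ass3}) is really what is needed to handle the integrand $Q_{\widetilde X} v$ that involves post-jump states of the filter. Once these technicalities are in place, everything else is standard MDM machinery.
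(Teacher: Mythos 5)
Your proposal is correct and follows essentially the same route as the paper: the paper likewise combines Lemma \ref{lemma3.3} (bounding function and contraction), Proposition \ref{prop:continuity} (continuity of $r$ and $Q_L$ in the Young topology), and the identification $V=J_\infty$ from Lemma \ref{lemma2.1}, and then delegates the Banach fixed-point and value-iteration machinery you spell out to \cite[Theorem 7.3.6]{bauerle2011markov} and \cite[Corollary 4.10]{colaneri2016shall}. The only step you leave implicit that the paper handles via the latter citation is the passage from the relaxed-control value back to the value over ordinary admissible strategies (density of non-relaxed controls), which is needed to conclude for $V$ itself.
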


\begin{proof}
First note that by Lemma~\ref{lemma3.3} a bounding function for our model is given by $b(t,w,\bpi)=e^{\gamma (T-t)} w,$ for some $\gamma >0$ and that the MDM is contracting.
Following the Proposition \ref{prop:continuity}, we get that the reward function $r$ and transition kernel $Q_L$  are continuous with respect to the Young topology.
By applying \cite[Theorem 7.3.6]{bauerle2011markov}  we obtain that $V$ is the fixed point of the maximal reward operator extended to the class of the relaxed controls and finally the result of the theorem follows from \cite[Corollary 4.10]{colaneri2016shall}.
\end{proof}

In order to provide a characterization of the optimal value function in terms of the solution of a suitable HJB equation, we resort to the viscosity solution analysis. This also legitimates the numerical study which will be done in the next section.

As a first step we want to reduce the problem to the case where the state process takes values in a compact set. Since the case of logarithmic utility is a limiting case of the power utility we only write the reduction for the latter. By using positive homogeneity we have that
\begin{equation*}
V(t,w,\bpi)=\frac{w^\theta}{\theta} \overline{V}(t, \bpi).
\end{equation*}
Define the compact set $\widetilde{\mathcal Y}:=[0,T]\times \Delta_K$.

We now define $\overline{g} \colon \widetilde{\mathcal{Y}} \times [-L, L]\to \R^{K+2}$ by identifying $$(\overline{g})^{(1)} = g^{(1)},  \, \text{ and } (\overline{g})^{(k+1)} = g^{(k+2)}\,,\; k=1,\dots,  K\,.
$$
and denote by $ \overline{\varphi}_u (\alpha, \widetilde y) $ the flow of $\overline{g}$.

Since the jump intensity $\lambda$ introduced in \eqref{eq:def-survival} is independent of $w$, by Theorem \ref{prop:fixed_point}, the optimality equation for $\overline{V}$ is given by
\begin{align*}
\overline{V} (\widetilde y) = \sup_{\alpha \in A}  \Big \{
\int_0^{T-t }
 \lambda^\alpha_u(\widetilde y) e^{-\Lambda^\alpha_u(\widetilde y)} \overline{Q} \  \overline{V}  \big(u+t, \overline{\varphi}_u (\alpha,\widetilde y)  , \alpha_u  \big) \ud u   + \frac{1}{\theta} e^{-   \Lambda_{T-t}^\alpha (\widetilde y)}  \Big \}, \label{eq:optimality-for-Vprime}
\end{align*}
where, for $h \in [-L,L]$, $\widetilde y  \in \widetilde{\mathcal{Y}}$, and any measurable function $\Psi \colon \widetilde{\mathcal{Y}} \to \R_{\ge 0}$, $\overline{Q}$ defines the new transition kernel
$$
\overline{Q} \Psi ( \widetilde y , h) := \lambda(\widetilde y) \sum_{j=1}^K \pi^j \int_{\mathbb{R}} (1+hz)^\theta \Psi \left(t,(\pi^i(1+u^i(t,\pi,z)))_{i=1,\dots,K} \right) \eta^{\P}(t,e_j,\ud z).
$$

This, in turn, implies that the value function $\overline{V}$ satisfies  $\overline{V}  = \overline{\mathcal{T}} \overline{V}$, with the reward operator $\overline{\mathcal{T}}$ given by
\begin{align*}
\overline{\mathcal{T}} \Psi (\widetilde y) &= \sup_{\alpha \in A}  \Big \{
\int_0^{T-t }
 \lambda^\alpha_u(\widetilde y) e^{-\Lambda^\alpha_u(\widetilde y)} \overline{ Q}   \Psi  \big(u+t, \overline{\varphi}_u (\alpha,\widetilde y)  , \alpha_u  \big) \ud u   + \frac{1}{\theta} e^{-   \Lambda_{T-t}^\alpha (\widetilde y)}  \Big \}.
\end{align*}
In the sequel we aim to show that  $\overline{V}$  solves, in the viscosity sense, the equation
\begin{equation} \label{eq:HJB-general}
F_{\overline{V}} \big ( \widetilde y, \overline{V}(\widetilde y), \nabla \overline{V} (\widetilde y) \big) = 0, \text{ for } \widetilde y \in  \widetilde{\mathcal{Y}}^0, \quad \overline{V}(\widetilde y) = \frac{1}{\theta} \text{ for } \widetilde y \in \partial  \widetilde{\mathcal{Y}},
\end{equation}
where, for $\Psi \colon \widetilde{\mathcal Y} \to \R_{\ge 0}$,  the function $F_\Psi \colon \widetilde{\mathcal Y} \times \R_{>0} \times \R^{K+1} \to \R $ if given by
$$ F_\Psi ( \widetilde y, v, p) = -\hspace{-0.1cm}\sup_{\nu \in [-L, L]} \big \{- \lambda( \widetilde{y}, \nu) v  +  \overline{g}(\widetilde{y}, \nu) p + \overline{Q}\Psi (\widetilde y, \nu) \big \}\,.$$

The following result, proven in \cite[Theorem 5.3]{colaneri2016shall} applies.

\begin{theorem}\label{thm:viscosity} Suppose that  
the Markov chain $Y$ has no absorbing state ($-q_{ii} > 0 $ for all $i\in \{1, \dots, K\}$). Then the value function $\overline{V}$ is the unique continuous viscosity solution of \eqref{eq:HJB-general} in $\widetilde{\mathcal{Y}}$ and a comparison principle holds.
\end{theorem}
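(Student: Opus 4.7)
The plan is to deduce the statement from \cite[Theorem 5.3]{colaneri2016shall} by checking that the structural hypotheses of that reference carry over to the reduced problem. The argument naturally splits into three phases: (a) continuity and boundary data for $\overline{V}$, (b) the viscosity sub- and supersolution properties of $\overline{V}$ with respect to $F_{\overline{V}}$, and (c) a comparison principle that upgrades existence to uniqueness.

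Phase (a) is essentially free: Theorem \ref{prop:fixed_point}(i) gives continuity of $V$ on $\widetilde{\mathcal X}$ with $V(T,w,\bpi)=U(w)$, and the positive-homogeneity reduction $V(t,w,\bpi)=\frac{w^\theta}{\theta}\overline{V}(t,\bpi)$ transfers these properties to $\overline{V}$ on the compact set $\widetilde{\mathcal Y}$, yielding $\overline{V}(T,\bpi)=1/\theta$ and continuity up to the boundary. For phase (b) I would argue from the fixed-point identity $\overline{V}=\overline{\mathcal T}\,\overline{V}$ via a local dynamic programming principle at the first jump time stopped at a small horizon $\delta>0$. Given a $C^1$ test function $\psi$ touching $\overline{V}$ from above at an interior point $\widetilde y_0$, I would insert the constant relaxed control $\alpha\equiv \nu\in[-L,L]$, split the integral in $\overline{\mathcal T}$ into $[0,\delta]$ plus a tail that is controlled by the survival factor $e^{-\Lambda^\alpha_\delta}$, use the touching condition to replace $\overline V$ by $\psi$, divide by $\delta$, and send $\delta\to 0^+$. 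Using $e^{-\Lambda^\alpha_u}|_{u=0}=1$, $\tfrac{\mathrm d}{\mathrm d u}e^{-\Lambda^\alpha_u}|_{u=0}=-\lambda(\widetilde y_0,\nu)$, and smoothness of the flow $\overline\varphi_u$, this yields the pointwise inequality $-\lambda(\widetilde y_0,\nu)\psi(\widetilde y_0)+\overline g(\widetilde y_0,\nu)\cdot\nabla\psi(\widetilde y_0)+\overline Q\overline V(\widetilde y_0,\nu)\le 0$; maximising over $\nu$ delivers $F_{\overline V}(\widetilde y_0,\psi(\widetilde y_0),\nabla\psi(\widetilde y_0))\ge 0$, i.e. the subsolution inequality. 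The supersolution inequality follows symmetrically by selecting $\varepsilon$-optimal Markov policies and invoking a standard measurable selection/tightness argument on $\widetilde{\mathcal A}$, whose compactness in the Young topology was established in Section \ref{Sec:PDMP}.

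The main obstacle, as always for nonlocal first-order HJB equations, is phase (c). I would run the doubling-of-variables strategy: assume by contradiction that a continuous subsolution $v_1$ and supersolution $v_2$ satisfy $\max_{\widetilde{\mathcal Y}}(v_1-v_2)>0$, which is attained at an interior point thanks to the shared boundary value, and penalise via $\Phi_\varepsilon(\widetilde y,\widetilde y')=v_1(\widetilde y)-v_2(\widetilde y')-\tfrac{1}{\varepsilon}|\widetilde y-\widetilde y'|^2$. Writing the two viscosity inequalities at a penalised maximiser $(\widetilde y_\varepsilon,\widetilde y'_\varepsilon)$ with the common control $\nu_\varepsilon$ that realises (or $\varepsilon$-realises) the supremum in the supersolution bound, and subtracting, one must control $\overline Q v_1(\widetilde y_\varepsilon,\nu_\varepsilon)-\overline Q v_2(\widetilde y'_\varepsilon,\nu_\varepsilon)$. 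The finiteness of $\varsigma$ together with Assumption \ref{ass3} ensures joint continuity of $(\widetilde y,\nu)\mapsto \overline Q v(\widetilde y,\nu)$ for every continuous $v$, allowing the nonlocal difference to be handled by bounded convergence as $\varepsilon\to 0$. The crucial structural ingredient provided by the no-absorbing-state assumption is that $\lambda(\widetilde y,\nu)\ge c_0>0$ uniformly on $\widetilde{\mathcal Y}\times[-L,L]$, furnishing the zeroth-order strict monotonicity $-\lambda\,v$ that dominates the remaining perturbation terms and yields the contradiction. Uniqueness in the continuous class then follows, while existence has already been established via $\overline{V}=\overline{\mathcal T}\,\overline{V}$, so $\overline V$ is the unique continuous viscosity solution and the comparison principle holds. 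I expect the delicate technical step to be the continuous dependence of $\overline Q v$ on $(\widetilde y,\nu)$, since the filter-update maps $\bpi\mapsto \pi^i(1+u^i(t,\bpi,z))$ must be controlled uniformly in $z\in\operatorname{supp}(\eta^{\P})$, which is precisely what Assumption \ref{ass3} was designed to guarantee.
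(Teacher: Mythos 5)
The first thing to note is that the paper does not prove this theorem at all: it states that the result is ``proven in \cite[Theorem 5.3]{colaneri2016shall}'' and simply verifies that the setting matches (continuity and the fixed-point property from Theorem \ref{prop:fixed_point}, the reduction to the compact state space $\widetilde{\mathcal Y}=[0,T]\times\Delta_K$, and the continuity provided by Assumption \ref{ass3}). Your reconstruction of what that citation hides follows the standard architecture for PDMP control problems --- fixed point $\Rightarrow$ viscosity property via a short-horizon dynamic programming expansion at the first jump, then uniqueness via doubling of variables --- and in that sense it is a reasonable and more informative substitute for the paper's one-line appeal to the reference. Phase (a) is correct as stated.

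There are, however, two concrete problems. First, in phase (b) your pairing of test functions with inequalities is inverted. The fixed-point identity gives, for \emph{every} constant control $\nu$, that $\overline V(\widetilde y_0)$ dominates the reward of using $\nu$ up to the first jump; to exploit this you must take $\psi$ touching $\overline V$ \emph{from below} (so that $\overline V(\overline\varphi_\delta)\ge\psi(\overline\varphi_\delta)$ preserves the direction of the inequality), and the resulting bound $-\lambda\psi+\overline g\cdot\nabla\psi+\overline Q\,\overline V\le 0$ for all $\nu$, i.e.\ $F_{\overline V}\ge 0$, is the \emph{supersolution} property, not the subsolution one. The subsolution inequality is the direction that requires $\varepsilon$-optimal (relaxed) controls and a test function touching from above. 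As written, both halves of your argument would fail because the inequality you substitute for $\overline V$ along the flow points the wrong way. Second, you misidentify the role of the hypothesis $-q_{ii}>0$. The jump rate of the PDMP is $\lambda(\widetilde y)=\sum_i\pi^i\eta^{\P}(t,e_i,\R)$, which is determined by the Poisson measure $\varsigma$ and the function $G$ and has nothing to do with the generator $Q(h)$ of the hidden chain; the no-absorbing-state condition therefore cannot furnish the uniform lower bound $\lambda\ge c_0>0$ you invoke for the zeroth-order monotonicity in the comparison argument (and such strict monotonicity can in any case be manufactured from the time direction by an exponential change of variables). In \cite{colaneri2016shall} the condition $-q_{ii}>0$ is needed to control the filter dynamics at the boundary of the simplex $\Delta_K$ --- it prevents the filter from being trapped at a vertex, which is what allows the comparison principle to be run on all of $\widetilde{\mathcal Y}$ with boundary data prescribed only at $t=T$. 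Without identifying this, your sketch does not explain why the stated hypothesis appears in the theorem, which is the one genuinely nontrivial point the statement asks you to account for.
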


In more explicit terms the HJB equation for the value function in the partial information setting, can be written as


\begin{align}
\nonumber&0=\!\! \sup_{h \in [-L,L]} \Bigg\{ \frac{\partial V}{\partial t}(t,w,\bpi)+w(1-h)\rho \frac{\partial V}{\partial w}(t,w,\bpi)\\
\nonumber&+\sum_{k,j=1}^K\frac{\partial V}{\partial \pi^k}(t,w,\bpi)\pi^j \left( q^{jk}(h)-\int_{\mathbb{R}} \pi^k u^k(t,z) \eta^\P(t,e_j,\ud z)\right)\\
\label{eq:HJB_2}&+\sum_{j=1}^K \pi^j \!\!\int_\R \! \left[V\left(t,w(1+h z), (\pi^{i}(1+u^{i}(t,z)))_{i\in \{1,...,K\}}\right)-V(t,w,\bpi)\right] \eta^\P(t,e_j,\ud z)       \Bigg\}.
\end{align}

In the next sections we analyze the case of logarithmic and power utility functions in detail, in the partial information framework.
\subsection{Logarithmic utility under partial information}\label{sec:log_utility_partial_info}
According to analysis conducted in the full information framework, we study the optimization problem with and without impact. For the logarithmic utility preferences this leads to two different approaches: in the first case pointwise maximization applies, while in the second one we need to use a dynamic programming approach. We also provide a comparison between the optimal strategies under full and partial information.
\subsubsection{Logarithmic utility - No market impact}
We first assume that the investor has no impact, meaning that entries in the generator of the Markov chain do not depeend on the trading strategy, and we solve the optimal control problem directly. For a fixed strategy $h\in \widetilde {\mathcal H}$ by applying the It\^{o} formula we get
\[
V(t,w,\bpi)=\log(w)+\sup_{h \in \widetilde{\mathcal H}} \widetilde{B}(t,\bpi;h),
\]
where
\begin{align*}
\widetilde{B}(t,\bpi;h)=\bE^{t,\bpi}\left[\int_t^T\left((1-h_s)\rho+\sum_{i=1}^K\pi^i_s\int_{\R}\log(1+h_s z) \eta^{\P}(s,e_i,\ud z)\right)\ud s \right. \\
\left.+\int_t^T\int_{\mathbb{R}} \log(1+h_{s} z){\nu}^\pi(\ud s, \ud z)\right],
\end{align*}
and proposition below holds.



\begin{proposition}\label{logu3} Suppose $U(w)=\log(w)$ for $w>0$.
\begin{itemize}
\item[i)] Let $h^*(t, \bpi)$ 
satisfy either
\begin{align}\label{eq:logoptimal}
\sum_{j=1}^K \pi^j_t\int_{\R}\frac{z}{1+h^{\ast}(t,\bpi) z} \eta^\P(t,e_j,\ud z)=\rho
\end{align}
or $h^{\ast}(t,\bpi)\in\{-L,L\}$. Then the optimal strategy $h^*_t=h^*(t, \bpi)$ for every $t \in [0,T]$ and $\bpi \in \Delta_K$.
\item[ii)] The value function is of the form
\end{itemize}
\small {\begin{equation*}\label{eq:log_utility_partial}
V(t,w ,\bpi)=\log(w)+\bE^{t,\bpi}\left[\int_t^T\left((1-h^{{\ast}}_s)\rho+\sum_{i=1}^K\pi^j_s \int_{R}\log(1+h^{{\ast}}_{s} z) \eta^{\P}(s,e_i,\ud z)\right)\ud s \right].
\end{equation*}}
\end{proposition}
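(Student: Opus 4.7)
The plan is to mirror the proof of Proposition \ref{logu2} line by line in the partial-information setting, substituting the $(\bF,\P)$-dual predictable projection $\eta^\P(s,Y_{s^-},\ud z)\ud s$ of $\mu$ with its $(\bF^S,\P)$-counterpart $\pi_{s^-}(\eta^\P(\ud z))\ud s$ from \eqref{eq:dpp}, and the compensated measure $\nu$ with $\nu^\pi$ from \eqref{eq:compensator_partial}.

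First, applying It\^o's formula to $\log W_T$ via \eqref{wealthdyn} and using the decomposition $\mu = \nu^\pi + \pi_{s^-}(\eta^\P(\ud z))\ud s$ yields
\begin{align*}
\log W_T = \log w &+ \int_t^T\!\!\Big((1-h_s)\rho + \sum_{i=1}^K\pi^i_s\!\!\int_\R \log(1+h_sz)\,\eta^\P(s,e_i,\ud z)\Big)\ud s\\
&+ \int_t^T\!\!\int_\R\log(1+h_sz)\,\nu^\pi(\ud s,\ud z).
\end{align*}
Exactly as in the full-information proof, the $\nu^\pi$-integral is a true $(\bF^S,\P)$-martingale: condition \eqref{eq:integrability} together with $\pi^i_s \in [0,1]$ provides the required second-moment bound for the filtered compensator, while Assumption \ref{ass2} ensures that $\log(1+h_sz)$ is well-defined on its support. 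Taking $\bE^{t,w,\bpi}[\cdot]$ on both sides then removes the stochastic integral.

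Next I would maximize the remaining drift integrand $B(s,\bpi;h):=(1-h)\rho+\sum_{i=1}^K\pi^i\int_\R\log(1+hz)\,\eta^\P(s,e_i,\ud z)$ pointwise in $h\in[-L,L]$. Strict concavity is immediate from
\[
\frac{\partial^2 B}{\partial h^2}(s,\bpi;h) = -\sum_{j=1}^K \pi^j \int_\R \frac{z^2}{(1+h z)^2}\,\eta^\P(s,e_j,\ud z) < 0,
\]
so the first-order condition \eqref{eq:logoptimal} identifies the unique interior maximizer whenever a root exists, and otherwise the optimum is attained at one of the boundary points $\pm L$ by continuity of $B(s,\bpi;\cdot)$ on the compact interval. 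A standard measurable selection argument then produces an $\bF^S$-predictable feedback $h^*(s,\bpi_s)$ that satisfies Assumptions \ref{ass0}--\ref{ass2}, hence lies in $\widetilde{\mathcal H}$; substituting this $h^*$ back into the expression yields the representation in item (ii).

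The main obstacle is the legitimacy of exchanging supremum and expectation, i.e., showing $\sup_{h\in\widetilde{\mathcal H}}\bE^{t,\bpi}[\int_t^T B(s,\bpi_s;h_s)\,\ud s] = \bE^{t,\bpi}[\int_t^T \sup_{h\in[-L,L]} B(s,\bpi_s;h)\,\ud s]$. This is what simplifies in the no-impact regime: the drift of the filter equation \eqref{eq:KS_I} reduces to $\sum_j q^{j,i}\pi^j\,\ud s$ which is $h$-free when $q^{i,j}$ is constant, and the observation $S$ itself evolves with an $h$-free law, so the distribution of $\bpi$ under $\P^{t,w,\bpi}$ is the same for every admissible control. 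Consequently the expectation factors into an $h$-free law of $\bpi$ and an $h$-dependent integrand, and the pointwise maximum is realized by the feedback $h^*(s,\bpi_s)$, completing the proof of both items (i) and (ii).
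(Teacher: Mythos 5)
Your proposal is correct and follows essentially the same route as the paper, which simply states that the proof ``follows from the same arguments'' as Proposition \ref{logu2}: It\^{o} expansion of $\log W_T$ with the $(\bF^S,\P)$-compensator $\pi_{s^-}(\eta^\P(\ud z))\ud s$ in place of $\eta^\P(s,Y_{s^-},\ud z)\ud s$, the martingale property of the $\nu^\pi$-integral, and pointwise maximization via the first- and second-order conditions. Your explicit justification that the law of the filter $\bpi$ is control-independent in the no-impact case (so that the supremum and expectation may be interchanged) is a point the paper leaves implicit in its appeal to the full-information argument, and it is correctly handled.
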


\begin{proof} The proof follows from the same arguments of that of Lemma~\ref{logu2}.
\end{proof}

\begin{remark}
Comparing the results in Proposition \ref{logu2} and Proposition \ref{logu3} we observe similar structures for the optimal strategies. Precisely in the partial information case the optimal strategy solves an equation of the form \eqref{eq:logoptimal} where $(\bF, \P)$-compensator of the jump measure is replaced by the $(\bF^S, \P)$-compensator. Intuitively this is due to the myopic property of the logarithmic utility; the agent replaces the unobserved local characteristics of the return process by their filtered estimates ignoring the extra risk associated with the information uncertainty (see, for example, \cite{feldman1992logarithmic}).
\end{remark}

\subsubsection{Logarithmic utility - Market impact}

According to full information, when there is an impact on the state of the Markov chain we cannot apply pointwise maximization, but we can characterize the value function as the solution of the HJB equation, in the viscosity sense. Here we propose the following ansatz $V(t,w,\bpi)=\log (w) + B(t,\bpi)$, for some function $B$ with the terminal condition $B(T,\bpi)=0$, for all $\bpi\in\Delta_K$.
Substituting this form of the value function into \eqref{eq:HJB_2}, we obtain the following equation:

{\small \begin{align}
 0&=  \sup_{h \in [-L,L]} \Bigg\{ \frac{\partial B}{\partial t}(t,\bpi)+ (1-h)\rho \\
\nonumber&+\sum_{k,j=1}^K\frac{\partial B}{\partial \pi^k}(t,\bpi)\pi^j \left( q^{jk}(h)-\int_{\mathbb{R}} \pi^k u^k(t,z) \eta^\P(t,e_j,\ud z)\right)\\
\label{eq:logpartial1} & +\sum_{j=1}^K \pi^j\int_\R \log(1+hz) + \left[B\left(t, (\pi^{i}(1+u^{i}(t,z)))_{i\in \{1,...,K\}}\right)-B(t,\bpi)\right] \eta^\P(t,e_j,\ud z)      \Bigg\}.
\end{align}}


By Theorem \ref{thm:viscosity}, the value function is the unique viscosity solution of  problem  \eqref{eq:logpartial1}. 
Given the form of the compensator $\eta^\P$, the equation can be solved, for instance using a numerical scheme.

\subsection{Power utility under partial information}\label{sec:power_utility_partial_info}
In this part we will work under the assumption of power utility, that is, $U(w)=\frac{1}{\theta}w^{\theta}$, $\theta<1$, $\theta \neq 0$.
Then the value function of the investor is
\begin{align*}V(t,w,\bpi)=\underset{h\in \widetilde{\mathcal H}}\sup \mathbb{E}^{t,w,\bpi}\left[\frac{1}{\theta}(W_T)^{\theta}\right],\end{align*}
where $\mathbb{E}^{t,w,\bpi}[\cdot]$ denotes the conditional expectation given $W_t=w$ and $\pi_t=\bpi$.

We know that, by positive homogeneity, the value function can be rewritten as
$V(t,w,\bpi)=\frac{1}{\theta}w^{\theta}\Gamma(t,\bpi)$, for some function $\Gamma:[0,T]\times \Delta_K\to \R_{>0}$ with $\Gamma(T,\bpi)=1$, for all $\bpi\in\Delta_K$.
Substituting this form of the value function into \eqref{eq:HJB_2}, we obtain the equation:
\begin{align*}
 0&=  \sup_{h \in [-L,L]} \Bigg\{ \frac{\partial \Gamma}{\partial t}(t,\bpi)+ \Gamma(t,\bpi)\theta(1-h)\rho \\
\nonumber&+\sum_{k,j=1}^K\frac{\partial \Gamma}{\partial \pi^k}(t,\bpi)\pi^j \left( q^{jk}(h)-\int_{\mathbb{R}} \pi^k u^k(t,z) \eta^\P(t,e_j,\ud z)\right)\\
\label{geqn2} & +\sum_{j=1}^K \pi^j\int_\R \left[(1+hz)^{\theta}\Gamma\left(t, (\pi^{i}(1+u^{i}(t,z)))_{i\in \{1,...,K\}}\right)-\Gamma(t,\bpi)\right] \eta^\P(t,e_j,\ud z) \Bigg\}.
\end{align*}



Therefore, after reduction, we deal with a problem having a bounded state space $[0,T]\times \Delta_K$. Theorem \ref{thm:viscosity} ensures existence and uniqueness of a viscosity solution for this problem. We solve it numerically in case of a two-state Markov chain in the next section.

\section{A Model with a Two-State Markov Chain}\label{sec:toy_model}
Suppose that we have a state process $Y$ described by a Markov chain with the state space $\mathcal{E}=\{e_1,e_2\}$. Without loss of generality we may assume that $e_1$ represents the good (bull) state of the market and $e_2$ is representing a bad (bear) state. We consider the situation where the investor's assets holdings are taken as a signal for the rest of the market that tends to behave accordingly. Then for the market, intensities of switching between the bull and the bear state depend on the portfolio weights of the reference ``large'' investor. In the current setting, we also assume that the impact of the portfolio choices on the Markov chain is linear, and assume that the infinitesimal generator has the form
\begin{gather*}
q^{12}(h_t)=a_1-b_1h_t, \quad q^{21}(h_t)=a_2+b_2h_t.
\end{gather*}
To guarantee that the entries $q^{1,2}$ and $q^{2,1}$ of the matrix stay positive we take $a_1, a_2>0$, $b_1\in (0, a_1/L)$ and $b_2\in (0, a_2/L)$.

This choice for the generator has the following motivation. If the investor buys, then she tends to increases the probability for the market to stay in (resp. switch to) the bull state, provided that the current state of the market is bull (resp. bear). Conversely, when the investor sells, the probability to stay in or jump to the bear state increases. This mechanism reflects certain real world situations such as manipulation and herding, which are frequently observed in markets where large investors are involved.

We assume that the return process may have two possible jump sizes, $\Delta R\in\{-\vartheta, +\vartheta\}$. Formally, it is given by
\begin{equation*}\label{ex1rdyn}R_t:=N^-_t+N^+_t, \quad t \in [0,T],\end{equation*}
where
\[\ud N^-_t= \int_{\R} -\vartheta\I_{[-\lambda^{-}(Y_{t^-}), 0]}(\zeta)\N(\ud t, \ud \zeta), \quad \ud N^+_t= \int_{\R} \vartheta\I_{[0,\lambda^{+}(Y_{t^-})]}(\zeta)\N(\ud t, \ud \zeta)\]
are two Poisson processes with jump sizes $\vartheta$ and intensities $\lambda^+(e_i)=\lambda^+_i, \lambda^-(e_i) = \lambda^-_i$, $i\in\{1,2\}$,  for some constants $\lambda^+_1, \lambda^+_2, \lambda^-_1, \lambda^-_2>0$ and such that $\lambda^+_1>\max\{\lambda^-_1, \lambda^+_2\}$ and $\lambda^-_2> \max\{\lambda^-_1, \lambda^+_2\}$.
This conditions imply that the intensity of an upward jump is larger in the bull state of the market compared  the bear one.  Moreover in the bull state it is more likely to observe an upward jump then a downward jump.
In this example we take the Poisson random measure $\N(\ud t, \ud \zeta)$ with intensity $\varsigma(\ud \zeta)\ud t=\I_{[-\lambda^-_2,\lambda^+_1 ]}\ud \zeta \ud t$. Then the compensator has the form
$$\eta^\P(t, e_i,\ud z) = \lambda^+_i \delta_{\{\vartheta\}}(\ud z) +
 \lambda^-_i  \delta_{ \{-\vartheta \}}(\ud z),
$$
where $\delta_{ \{x \}}(\ud z)$ is the Dirac mass at point $x$.
Notice that here Assumption \ref{ass2} is satisfied for $-\frac{1}{\vartheta}<h_t<\frac{1}{\vartheta}$.

In the reminder of this section we are going to compare the results for logarithmic and power utility choices for the full and the partial information settings.
\subsection{Logarithmic utility}
First we consider a investor with full information on the market. Starting with the case of no market impact, applying Proposition \ref{logu2} and checking the first and second order conditions we obtain the optimal strategy
\begin{equation*}\label{eq:log_full}
  h^{*}(t, e_i)=\frac{\lambda^+_i+\lambda^-_i}{2\rho}-\sqrt{\left(\frac{\lambda^+_i+\lambda^-_i}{2\rho}+\frac{1}{\vartheta}\right)^2-2\frac{\lambda^+_i}{\vartheta \rho}}.
\end{equation*}

In particular, if $\lambda^+_i=\lambda^-_i=\lambda_i$ we get that $h^*(t, e_i)=\frac{\lambda_i}{\rho}-\sqrt{\frac{\lambda^2_i}{\rho^2}+\frac{1}{\vartheta^2}}$. Finally we can characterize the value function as
\begin{align*}
V(t,w ,e_i)=&\log(w)+\bE^{t,i}\!\left[\int_t^T\!\!\!\!\left((1-h^{\ast}(s,e_1))\I_{\{Y_s=e_1\}}\rho+(1-h^{\ast}(s,e_2))\I_{\{Y_s=e_2\}}\rho\right)\ud s\right.\\
&+\int_t^T\int_{R}\log(1+h^{\ast}(s,e_1) z) \I_{\{Y_{s^-}=e_1\}} \eta^{\P}(\ud s,e_1,\ud z) \\
&+\left.\int_t^T\int_{R}\log(1+h^{\ast}(s,e_2) z) \I_{\{Y_{s^-}=e_2\}} \eta^{\P}(\ud s,e_2,\ud z) \right].
\end{align*}

For the case where the impact is non-zero, the value function can be characterized as $V(t,w,e_i)=\log(w)+\beta(t,e_i)$, $i \in \{1,2\}$ with the functions $\beta(t,e_1)$ and $\beta(t,e_2)$ solving

\begin{align}
\frac{\ud \beta}{\ud t}(t,e_1)=& - \sup_{h \in [-L,L]} \Bigg\{    (1-h) \rho  + \left(\beta(t,e_2)-\beta(t,e_1)\right) (a_1-b_1 h) \nonumber\\
&+\int_\R \log(1+h z) \eta^{\P}(t,e_1,\ud z)\Bigg\},\nonumber \\
\frac{\ud \beta}{\ud t}(t,e_2)=& - \sup_{h \in [-L,L]} \Bigg\{    (1-h) \rho  +\left(\beta(t,e_1)-\beta(t,e_2)\right) (a_2+b_2 h)\nonumber\\
&+\int_\R \log(1+h z) \eta^{\P}(t,e_2,\ud z)\Bigg\},\nonumber
\end{align}
respectively, with boundary conditions $\beta(T,e_i)=0$, for $i=\{1,2\}$.

Assume now that the investor's information is given by the filtration $\bF^S$. There, by Proposition \ref{logu3},  the optimal strategy in case of no market impact turns out to be
\begin{equation*}\label{eq:log_partialex}
  h^{*}(t,\bpi)=\frac{ \bpi^\top \Lambda^+ +\bpi^\top \Lambda^-}{2\rho}-\sqrt{\left(\frac{\bpi^\top \Lambda^+ +\bpi^\top \Lambda^-}{2\rho}+\frac{1}{\vartheta}\right)^2-2\frac{\bpi^\top \Lambda^+}{\vartheta \rho}}.
\end{equation*}
where $(\Lambda^+)^\top=(\lambda^+_1, \lambda^+_2)$ and similarly $(\Lambda^-)^\top=(\lambda^-_1, \lambda^-_2)$. This is the classical case where the optimal strategy has the same structure of that under full information in which the unobserved components are replaced by their filtered estimates. The stochastic representation of the value function is given by
\begin{align*}
V(t,w ,\bpi)=&\log(w)+\bE^{t,\pi}\left[\int_t^T(1-h^{*}(s,\pi_s))\rho\ud s\right.\\
&+\int_t^T\pi^1_s \int_{R}\log(1+h^{*}(s,\pi_s) z) \eta^{\P}(s,e_1,\ud z)\ud s \\
&\left.+ \int_t^T\pi^2_s \int_{R}\log(1+h^{*}(s,\pi_s) z) \eta^{\P}(s,e_2,\ud z) \ud s \right].
\end{align*}

In the partial information case, the value function has the form $V(t,w,\pi)=\log (w)+B'(t, \pi)$ where $B'(t,\pi)=B(t, \pi, (1-\pi))$ is the solution of the HJB equation

\begin{align*}
0=&\sup_{h \in [-L,L]}\left\{ \frac{\partial B'}{\partial t}(t, \pi)+ (1-h) \rho + \frac{\partial B'}{\partial \pi}(t, \pi)\left(\pi q^{11}(h)+(1-\pi) q^{21}(h)\right)\right.\\
&-\frac{\partial B'}{\partial \pi} \pi (1-\pi)(\lambda^+_1+\lambda^-_1-\lambda^+_2-\lambda^-_2)\\
& + (\pi\lambda^+_1+(1-\pi)\lambda^+_2)\log(1+hz) + (\pi\lambda^-_1+(1-\pi)\lambda^-_2)\log(1-hz)\\
 &  + \pi (\lambda^+_1+\lambda^-_1)\left[B'\left(t, \frac{\pi \lambda^+_1}{\pi \lambda^+_1+ (1-\pi) \lambda^+_2}\right)-B'(t,\pi)\right] \\
 &\left.+ (1-\pi)(\lambda^+_2+\lambda^-_2) \left[B'\left(t, \frac{\pi \lambda^+_1}{\pi \lambda^+_1+ (1-\pi) \lambda^+_2}\right)  -B'(t,\pi)\right] \right\}.
\end{align*}
An explicit solution of the above equation is difficult to find. In general, it is possible to apply numerical experiments to get the qualitative behavior of both the value function and the optimal strategy. Since the logarithmic utility case do not provide any simplification, for numerical study we only consider the power utility case.

\subsection{Power utility}
In the power utility case, when the investor has a full information on the state of the market, analysis of the optimization problem leads to solving the system
\begin{align*}
\frac{\ud \gamma}{\ud t}(t,e_1)=& - \sup_{h \in [-L,L]} \left\{    (1-h) \rho  +\frac{1}{\theta} \left(e^{\theta (\gamma(t,e_2)-\gamma(t,e_1))}-1\right) (a_1-b_1 h) \right.\nonumber\\
&\left.+\frac{\lambda^+_1}{\theta}\left((1+h \vartheta)^\theta-1\right) +\frac{\lambda^-_1}{\theta}\left((1-h \vartheta)^\theta-1\right)\right\},\\
\frac{\ud \gamma}{\ud t}(t,e_2)=& - \sup_{h \in [-L,L]} \left\{    (1-h) \rho  +\frac{1}{\theta} \left(e^{\theta (\gamma(t,e_1)-\gamma(t,e_2))}-1\right) (a_2+b_2 h)\right.\nonumber\\
&+\left.\frac{\lambda^+_2}{\theta}\left((1+h \vartheta)^\theta-1\right) +\frac{\lambda^-_2}{\theta}\left((1-h \vartheta )^\theta-1\right)\right\},
\end{align*}
with the final condition $\gamma(T,e_1)=\gamma(T,e_2)=0$.

For the solution we use the following algorithm. Let $(t_0, \dots, t_N)$ be the sequence of discretized time points with $t_0=0$ and $t_N=T$. Knowing the final conditions allows to compute easily the control $h^{\ast}_{T}$ at time $T$. Then, using a backward scheme we solve the corresponding ODE at $t_{N-1}$. Given the value at $t_{N-1}$, now we can compute the control $h^{\ast}_{t_{N-1}}$ and we proceed until $t_0=0$.

In the numerical analysis we use the set of parameters: $T=1$ year, $w=1$, $\rho=0$, $\vartheta=0.02$, $\lambda^+_1=10$, $\lambda^-_1=5$, $\lambda^+_2=5$, $\lambda^-_2=20$, $\theta=0.5$, $a_1=5$, $b_1=-0.1$, $a_2=5$, $b_2=0.1$.

In Figure \ref{fig:full info-opt str} we plot the optimal investment strategies for cases where the initial state of the Markov chain is bull (lighter line) or bear (darker line) both with (solid line) and without (dashed line) market influence. Firstly, we observe that in all cases the optimal strategies never reach the values $\{-L, L\}$ corresponding to $L=50$, meaning that there is always an interior solution. Secondly, we can see that as time approaches to maturity, the optimal strategy in the case with impact converges to the one in the no-impact case. Moreover, actions of the investor are very different when we compare cases with and without impact. Consider for instance the situation where the initial state is bull. We observe that in the no-impact case the strategy is constant and always positive, meaning that the investor always buys. On the other hand,  in the case with impact the investor short-sells if time to maturity is large. This kind of an action might be interpreted in the following way. The investor tries to produce a jump in the Markov chain and make advantage of lower prices that would prevail in a future time. Clearly, she switches her behavior as time to maturity becomes shorter, since there is not enough time to make such a change. For the case of initial bear state, we see that the investor always short-sells. This is reasonable for the current parameter choice as on average the prices tend to go down. For the case with impact, the strategy turns out to be more aggressive.

\begin{figure}[htbp]
\centering
\includegraphics[height=7cm]{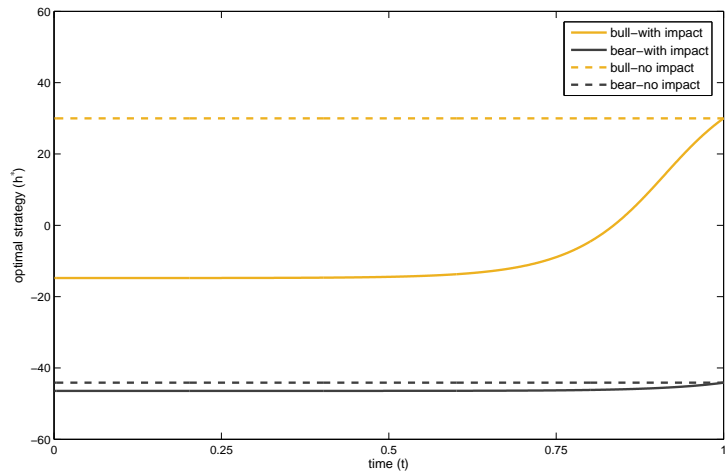}
\caption{Optimal strategy under full information  with (solid) and without (dashed) impact: $T=1$ year, $w=1$, $\rho=0$, $\vartheta=0.02$, $\lambda^+_1=10$, $\lambda^-_1=5$, $\lambda^+_2=5$, $\lambda^-_2=20$, $\theta=0.5$, $a_1=5$, $b_1=0.1$, $a_2=5$, $b_2=0.1$. }
\label{fig:full info-opt str}
\end{figure}

The different behavior for investors with market influence results in positive gains from utility maximization. Indeed, as we see in Figure \ref{fig:full info-value}, the value functions corresponding to the impact cases are sensibly larger than those corresponding to no-impact cases.  The optimal value corresponding to the bad state is larger than the optimal value for the initial good state. This is a consequence of the fact that the investor is allowed to short-sell, and clearly this also depends on our choice of the intensities of upward and downward jumps. In other words, we see that there is no absolute good and bad state.

\begin{figure}[htbp]
\centering
\includegraphics[height=7cm]{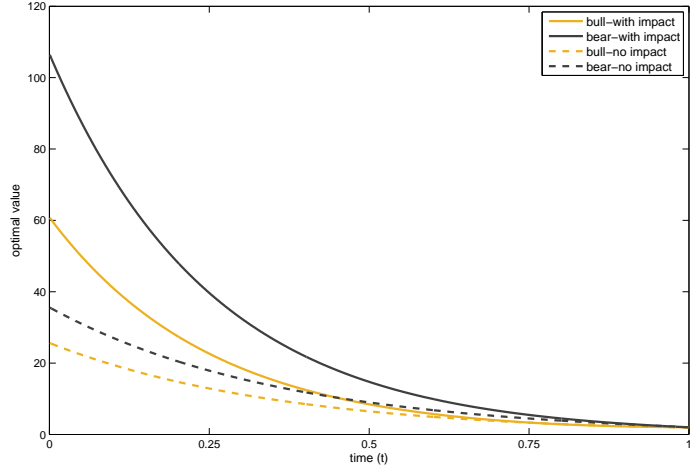}
\caption{Optimal value under full information with (solid) and without (dashed) impact: $T=1$ year, $w=1$, $\rho=0$, $\vartheta=0.02$, $\lambda^+_1=10$, $\lambda^-_1=5$, $\lambda^+_2=5$, $\lambda^-_2=20$, $\theta=0.5$, $a_1=5$, $b_1=0.1$, $a_2=5$, $b_2=0.1$. }
\label{fig:full info-value}
\end{figure}

Suppose now that the available information for the investor is given by $\bF^S$. Note that, for a two-state Markov chain we have $\pi^1+\pi^2=1$, then denote $\pi^1$ with $\pi$, we can define $\Gamma'(t, \bpi):=\Gamma(t, \pi, (1-\pi))$ and reduce the dimension of the optimization problem. In this case the function $\Gamma'$ can be characterized as the solution of the HJB

\begin{align*}
0=&\sup_{h \in [-L,L]}\left\{ \frac{\partial \Gamma'}{\partial t}(t, \pi)+ \Gamma'(t, \pi) \theta (1-h) \rho + \frac{\partial \Gamma'}{\partial \pi}(t, \pi)\left(\pi q^{11}(h)+(1-\pi) q^{21}(h)\right)\right.\\
&-\frac{\partial \Gamma'}{\partial \pi} \pi (1-\pi)(\lambda^+_1+\lambda^-_1-\lambda^+_2-\lambda^-_2)\\
&+(\pi \lambda^+_1+(1-\pi) \lambda^+_2)\left( (1+h\vartheta)^\theta \Gamma'\left(t, \frac{\pi \lambda^+_1}{\pi \lambda^+_1+ (1-\pi) \lambda^+_2}\right)-\Gamma'(t, \pi)\right)\\
&\left.+(\pi \lambda^-_1+(1-\pi) \lambda^-_2)\left( (1-h\vartheta)^\theta \Gamma'\left(t, \frac{\pi \lambda^-_1}{\pi \lambda^-_1+ (1-\pi) \lambda^-_2}\right)-\Gamma'(t, \pi)\right)\right\}.
\end{align*}

Since, in general it is not possible to find an explicit solution to the above maximization problem we deepen our analysis through numerical experiments. In the case of partial information, we use an explicit finite difference method to solve the corresponding partial integro-differential equation. In order to guarantee the positivity of the scheme we use forward-backward approximation for the first order derivatives (see, for instance, \cite{cont2005finite}). Also, to ensure the convergence of the scheme we verify the usual consistency and stability conditions.

As in the full information case we study both the optimal strategy and the value function, and obtain results that are consistent with those in the full information setting. We observe in Figure \ref{fig:partial info-opt str} that optimal strategies in the impact case converge, for values of time close to maturity, to optimal strategies in the no-impact cases.
Moreover the interesting behavior of the investor with an impact is preserved: for the initial bull state she short-sells when the time is far from maturity and in the initial bear state the strategy is always more aggressive.

\begin{figure}[htbp]
\centering
\includegraphics[height=7cm]{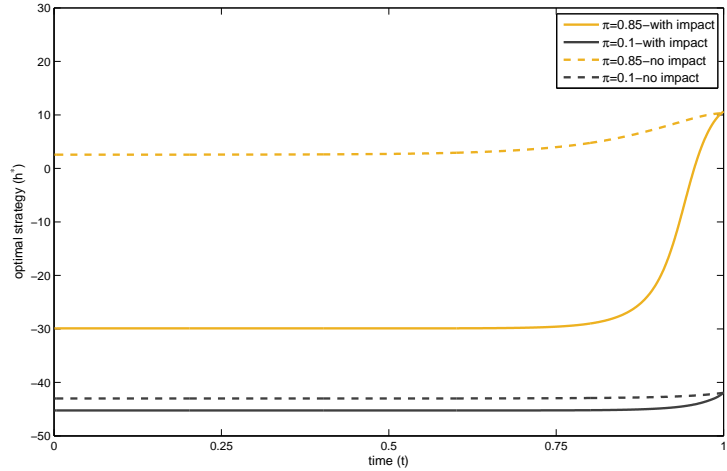}
\caption{Optimal strategy under partial information with (solid) and without (dashed) impact: $T=1$ year, $w=1$, $\rho=0$, $\vartheta=0.02$, $\lambda^+_1=10$, $\lambda^-_1=5$, $\lambda^+_2=5$, $\lambda^-_2=20$, $\theta=0.5$, $a_1=5$, $b_1=0.1$, $a_2=5$, $b_2=0.1$.} 
\label{fig:partial info-opt str}
\end{figure}

The value function is consistently larger for the investor with an impact, see Figure \ref{fig:partial info-value}.

\begin{figure}[htbp]
\centering
\includegraphics[height=7cm]{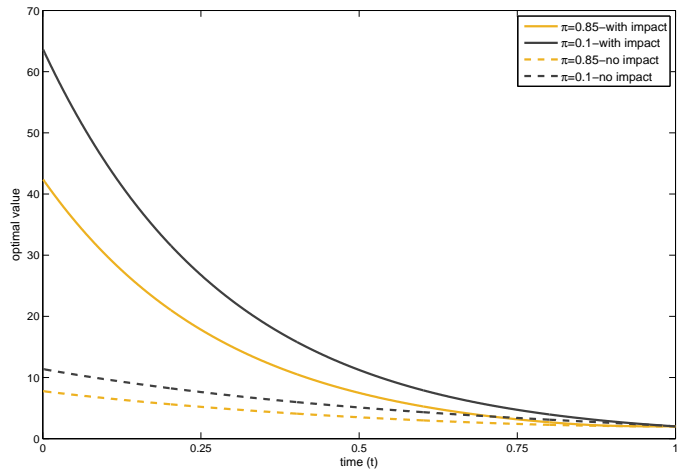}
\caption{Optimal value under partial information with (solid) and without (dashed) impact: $T=1$ year, $w=1$, $\rho=0$, $\vartheta=0.02$, $\lambda^+_1=10$, $\lambda^-_1=5$, $\lambda^+_2=5$, $\lambda^-_2=20$, $\theta=0.5$, $a_1=5$, $b_1=0.1$, $a_2=5$, $b_2=0.1$. }
\label{fig:partial info-value}
\end{figure}

Finally, we analyze the gains from filtering. In order to do that we compare the value functions corresponding to two investors. The first one uses the optimal strategy obtained in the partial information setting, while the second one ignores the presence of two different regimes in the market. Instead the second one uses the average parameters,
${\lambda^+}=\lambda^+_1 p+\lambda^+_2 (1-p), \quad {\lambda^-}=\lambda^-_1 p+\lambda^-_2 (1-p)$, where $p=\frac{a_2}{a_1+a_2}$.  In Figure \ref{fig:gains}, we observe that the investor's gains from using filtered estimates, instead of the average parameters, are always non-negative. Those profits justify the additional complexity induced by partial information.
\begin{figure}[htbp]
\centering
\includegraphics[height=7cm]{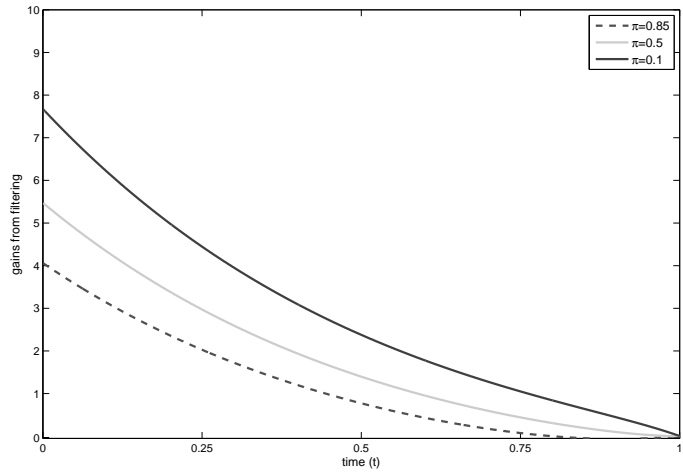}
\caption{Gains from filtering: $T=1$ year, $w=1$, $\rho=0$, $\vartheta=0.02$, $\lambda^+_1=10$, $\lambda^-_1=5$, $\lambda^+_2=5$, $\lambda^-_2=20$, $\theta=0.5$, $a_1=5$, $b_1=0.1$, $a_2=5$, $b_2=0.1$.}
\label{fig:gains}
\end{figure}

\section{Conclusion}
In this paper, we study the portfolio optimization problem for an investor who has an indirect effect on the risky asset prices. We represent the state of the market sentiment via a finite-state Markov chain, possibly not observable, whose generator depends on the portfolio choices of the investor. In this way we intend to model the influence of a large investor on the rest of the market, who tend to mimic her choices.

We solve the utility maximization problem from terminal wealth for an investor who is endowed with a logarithmic and a power utility under both full and partial information.
Under full information, from the mathematical point of view, we show that for logarithmic utility, pointwise maximization cannot be applied as, in this model, investors' decisions today may change the state of the market at a future date. Therefore we solve the problem by dynamic programming. In both logarithmic and power utility cases we show that the value function is the unique solution of the HJB equation, which reduces to a system of ODEs.

In the partial information case, we apply the reduction approach to transform the original optimization problem into an equivalent optimization problem where all state variables turn out to be observable with respect to the investor's filtration. In this setting, it is not possible to solve the problem directly. However, we can apply the theory of control for piecewise deterministic processes and show that the optimization problem has a solution and the optimal value function is the unique viscosity solution of the HJB equation.

We make a numerical study in a simpler example with a two-state Markov chain for a deeper understanding of the investor's optimal decisions. Interestingly we found that the behavior of the investor with an influence on the market is quite different if compared to the choices of an investor with no impact, and this has a return in gains from utility maximization, under both full and partial information. Moreover, our model allows for a scenario where the investor can make larger profits in bad market conditions than in good ones.  Finally, the use of filtering in the partial information, instead of average data, produces positive profits, that justify the additional complexity.

\appendix
\section{Technical proofs}\label{app:proofs}

\begin{proof}[Proof of Theorem \ref{thm:verification}]
The argument for the proof of this verification theorem is standard.
For part (i.), given an admissible control $h\in \mathcal H$, let $W^{(h)}$ be the solution to equation \eqref{wealthdyn} corresponding to the strategy $h$. let  $\{T_n\}_{n \ge 1}$ be the sequence of jump times of $Y$ and denote by $m$  the jump measure of $Y$,
\[
m([0,t]\times\{e_j\}):=\sum_{n \ge 1} \I_{\{Y_{T_n}=e_j\}}\I_{\{T_n\leq T\}}.
\]
Its compensator is then given by
\[
\phi([0,t]\times\{e_j\})=\int_0^t\sum_{i \neq j} q^{i,j}(h_s)\I_{\{Y_{s^-}=e_i\}}\ud s.
\]
Then the semimartingale decomposition of $Y^{(h)}$ is given by
\begin{align}
Y^{(h)}_t=&Y^{(h)}_0+\int_0^t\sum_{i, j=1}^K (e_j-e_i) q^{i,j}(h_s)\I{\{Y^{(h)}_{s^-}=e_i\}} \ud s \\
&+ \int_0^t \sum_{i,j=1}^K (e_j-e_i)\I_{Y^{(h)}_{s^-}=e_i}(m-\phi)(\ud s\times \{e_j\}), \quad t \in [0,T].\label{eq:semimgY}
\end{align}
Denoting the partial derivative of $\Upsilon$ with respect to time and wealth by $\Upsilon_t$ and $\Upsilon_w$, respectively and applying It\^{o}'s formula we get
\begin{align}
&\Upsilon(T, W_T^{(h)}, Y^{(h)}_T)=\Upsilon(t,w,e_i)+\int_t^T\mathcal L^h \Upsilon(s, W^{(h)}_{s^-},Y^{(h)}_{s^-})  \ud s\\
&+\int_t^T\sum_{i,j=1}^K\!\left(\Upsilon(s, W^{(h)}_s, e_j)-\Upsilon(s, W^{(h)}_s,  e_i)\I_{Y^{(h)}_{s^-}=e_i}\right) (m-\phi)(\ud s\times \{e_j\})\nonumber\\
&+ \int_\R \left(\Upsilon\left(s,W^{(h)}_{s^-}(1+h_{s} z), Y^{(h)}_{s}\right)-\Upsilon(s,W^{(h)}_{s^-},Y^{(h)}_{s})\right)\nu(\ud s, \ud z),\label{eq:G1}
\end{align}
where $\nu(\ud t, \ud z)$ is the compensated jump measure defined in \eqref{eq:nu}. Since $\Upsilon$ satisfies the HJB equation in \eqref{eq:HJB_1} we get
\begin{align*}
&\Upsilon(T, W_T^{(h)}, Y^{(h)}_T)\leq \Upsilon(t,w,e_i)\!\\
&+\!\!\int_t^T\!\!\sum_{i,j=1}^K\!\left(\Upsilon(s, W^{(h)}_{s^-},  e_j)\!-\!\Upsilon(s, W^{(h)}_{s^-}, e_i)\I_{Y^{(h)}_{s^-}=e_i}\right) (m-\phi)(\ud s\!\times\! \{e_j\})\\
&+\! \int_t^T\!\!\!\int_\R\! \left(\Upsilon\left(s,W^{(h)}_{s^-}(1+h_{s} z), Y^{(h)}_{s}\right)-\Upsilon(s,W^{(h)}_{s^-}, Y^{(h)}_{s})\right) \nu(\ud s, \ud z).
\end{align*}
By \eqref{eq:cond1} and \eqref{eq:cond2}, the stochastic integrals
\begin{gather*}
\int_0^t\!\!\sum_{i,j=1}^K\!\left(\Upsilon(s, W^{(h)}_{s^-},  e_j)\!-\!\Upsilon(s, W^{(h)}_{s^-}, e_i)\I_{Y^{(h)}_{s^-}=e_i}\right) (m-\phi)(\ud s\!\times\! \{e_j\}), \ t \in [0,T],\\
\int_0^t\!\!\!\int_\R\! \left(\Upsilon\left(s,W^{(h)}_{s^-}(1+h_{s} z), Y^{(h)}_{s}\right)-\Upsilon(s,W^{(h)}_{s^-}, Y^{(h)}_{s})\right) \nu(\ud s, \ud z), \ t \in [0,T],
\end{gather*}
are $(\bF, \P)$-true martingales (see, e.g. \cite[Theorem 26.12 part 2]{davis1993markov}). Hence, taking the expectation in \eqref{eq:G1} we obtain
\begin{align}
\Upsilon(t,w,e_i)\geq V(t,w,e_i).\label{eq:inequality}
\end{align}
For part $(ii.)$, if $h^*$ is a maximizer of equation \eqref{eq:HJB_1}, we get the equality in the expression \eqref{eq:inequality}.
\end{proof}

\medskip

\begin{proof}[Proof of Proposition \ref{prop:KS}]
Consider a function  $f:\mathcal E\to \R$. For every $\bF^S$-~predictable control $h$, using the semimartingale decomposition of $Y$ in equation \eqref{eq:semimgY} and applying the It\^{o}'s formula we get
\[
\ud f(Y_t)= Q^\top(h_t) f(Y_t) \ud t + \ud M^{(1)}_t,
\]
where $M^{(1)}$ is an $(\bF, \P)$-martingale and $Q^\top$ denotes the transpose of the generator matrix $Q$. Denote by $\widehat{H_t}$ the projection of $H_t$ over the $\sigma$-algebra $\F^S_t$, for some $\bF$-adapted process $H=\{H_t, \ t \in [0,T]\}$ i.e. $\mathbb E \left[H_t |\F^S_t\right]$, then we obtain
\[
\ud \widehat{f(Y_t)}= Q^\top(h_t) \widehat{f(Y_t)} \ud t + \ud M^{(2)}_t,
\]
where now $M^{(2)}$ is an $(\bF^S, \P)$-martingale. Using the martingale representation theorem for $(\bF^S, \P)$-martingale (see, for instance,  \cite[Theorem A5.5]{davis1993markov} ) we can write
\[
\widehat{f(Y_t)}-\widehat{f(Y_0)}-\int_0^t Q^\top(h_s) \widehat{f(Y_s)} \ud s=\int_0^t w^\pi(s,z) {\nu}^\pi(\ud s, \ud z),
\]
for some $\bF^S$-predictable process $w^\pi$ such that $\esp{\int_0^Tw^\pi(t,z) \eta^\P(t, e_i, \ud z)}<\infty$ for $i\in\{1, \dots, K\}$, where $\nu^\pi(\ud t, \ud z)$ is the $\bF^S$-compensated measure defined in \eqref{eq:compensator_partial}. Let $U_t=\int_0^t\int_{\R}C(s,z)\mu(\ud s, \ud z)$, for some $\bF^S$-predictable process $C$. Then we have
\[
\ud(U_t f(Y_t))= \left(U_t Q^\top(h_t) f(Y_t) +\int_\R f(Y_{t^-}) C(t,z) \eta^\P(t, Y_{t^-}, \ud z)\right)\ud t + \ud M^{(3)}_t,
\]
for some $(\bF, \P)$-martingale $M^{(3)}$. Projecting again over $\F^S_t$, and using the fact that $U$ is $\bF^S$-adapted, we get
\begin{align}\label{eq:projectio1}
\ud(\widehat{U_t f(Y_t)})= \left(U_t\widehat{Q^\top(h_t) f(Y_t)} +\int_\R \Gamma(t,z) \widehat{f(Y_{t^-})  \eta^\P(t, Y_{t^-}, \ud z)}\right)\ud t + \ud M^{(4)}_t,
\end{align}
where $M^{(4)}$ is an $(\bF^S, \P)$-martingale. Now we compute the product $U_t \widehat{f(Y_t)}$
\begin{align}
\ud(\widehat{U_t f(Y_t)})=& \left(U_t\widehat{Q^\top(h_t) f(Y_t)}+\int_\R \Gamma(t,z) w^\pi(t,z) \pi_{t^-}(\eta^\P(\ud z)) \right.\nonumber\\
&\left.+\int_\R C(t,z) \widehat{f(Y_{t^-})} \pi_{t^-}(\eta^\P(\ud z))\right)\ud t + \ud M^{(5)}_t. \label{eq:projectio2}
\end{align}
By the equality $U_t \widehat{f(Y_t)}=\widehat{U_tf(Y_t)}$, we get that the finite variation terms in equations \eqref{eq:projectio1} end \eqref{eq:projectio2} coincide, and this results to the expression for the process $w^\pi$
\[
w^\pi(t,z)=\frac{\ud \pi_{t^-}(f \eta^\P)}{\ud \pi_{t^-}(\eta^\P)}(z)-\pi_{t^-}(f), \quad (t,z)\in [0,T]\times \R,
\]
where $\frac{\ud \pi_{t^-}(f \eta^\P)}{\ud \pi_{t^-}(\eta^\P)}(z)$ is the Radon-Nikodym derivative of the measure $\pi_{t^-}(f \eta^\P(\ud z))$ with respect to $\pi_{t^-}(\eta^\P(\ud z))$. Finally choosing $f(Y_t)=\I_{\{Y_t=e_i\}}$ we get that
\[
w^\pi(t,z)=\pi_{t^-}^i \frac{1}{\sum_{j=1}^K \pi_t^j \frac{\ud \eta^{\P}(t,e_j,z)}{\ud \eta^{\P}(t,e_i,z)}}-1, \quad (t,z)\in [0,T]\times \R,
\]
where $\ds \frac{\ud \eta^{\P}(t,e_j,z)}{\ud \eta^{\P}(t,e_i,z)}$ is the Radon-Nikodym derivative of the measure $\eta^{\P}(t,e_j,\ud z)$ with respect to $\eta^{\P}(t,e_i,\ud z)$, which leads to Equation \eqref{eq:KS_I}.
\end{proof}

\medskip

\begin{proof}[Proof of Lemma \ref{lemma2.1}] The proof follows the same lines of \cite[Theorem 9.3.1]{bauerle2011markov}. Let $(T_n, Z_n)$ be the sequence of jump times and jump sizes of the PDMP. Then we have
\begin{align*}
V^{\{ h^n\}}&=\mathbb{E}^{\{ h^n\}}\left[U(W_T)\right]=\mathbb{E}^{\{ h^n\}}\left[\sum_{n=0}^\infty \I_{T_n<T<T_{n+1}}U(W_T)\right]\\
&=\sum_{n=0}^{\infty}\mathbb{E}^{\{ h^n\}}\left[\mathbb{E}^{\{h^n\}}\left[ \I_{T_n<T<T_{n+1}}U(W_T)|T_n<T, X_{T_n\wedge T}\right]\right]\\
&=\sum_{n=0}^{\infty}\mathbb{E}^{\{ h^n\}}\left[\mathbb{E}^{\{h^n\}}\left[ e^{-\Lambda^{h^n}_{T-T_n}(L_n)}U(w_{T-T_n})\right]\I_{T_n<T}\right]\\
&=\mathbb{E}^{\{h^n\}}\left[\sum_{n=0}^\infty \I_{T_n<T}r(L_n, h^n)\right]=J^{\{h^n\}}_\infty.
\end{align*}
\end{proof}

\medskip

\begin{proof}[Proof of Lemma \ref{lemma3.3}] Since $e^{-\Lambda^{\alpha}_u(\widetilde{x})}<1$ we get $r(\widetilde{x},\alpha)\le w$. Next we turn to estimating $Q_L b(\widetilde x, \alpha)=\int_{\widetilde {\mathcal{X}}}b(x')Q_L(\ud x'|\widetilde{x},\alpha)$. It holds that
\begin{align*}
&\int_{\widetilde{\mathcal{X}}}b(x')Q_L(\ud x'|\widetilde{x},\alpha) \\
&\qquad = \int_0^{T-t}e^{c(T-s-t)}e^{-\Lambda^{\alpha}_s(\widetilde x)}\int_{-L}^L\int_{\mathbb{R}}w(1+hz)\sum_{j=1}^K\pi_j\eta^j(t+s,\ud z)\alpha_s(\ud h) \ud s\\
&\qquad\le b(\widetilde{x})c_{\eta}\int_0^Te^{-c r}\ud r=b(\widetilde{x})c_{\eta}\frac{1}{\gamma}(1-e^{-c T}) \le \frac{c_{\eta}}{\gamma}b(\widetilde{x}),
\end{align*}
where we define $$c_{\eta}=\underset{t\in[0,T]}{\underset{j\in\{1,\dots, K\}}{\underset{h \in [-L,L]}{\sup}}}\left\{ \int_{\mathbb{R}}(1+hz)\eta^\P(t,e_j,\ud z) \right\}<\infty.$$ Clearly $\frac{c_{\eta}}{c}<1$ for sufficiently large $c$, so that the MDM is contracting.
\end{proof}

\medskip



\begin{proof}[Proof of Proposition \ref{prop:continuity}]
Let $(\widetilde x_n, \alpha_n)$ be a sequence converging to $(\widetilde x, \alpha)$ as $n \to \infty$. Then by \cite[Theorem 43.5]{davis1984piecewise} we have that
\[
\lim_{n \to \infty}\sup_{u \in [0,T]} |\widetilde \varphi^{\alpha_n}_u(\widetilde x_n)-\widetilde \varphi^\alpha_u(\widetilde x)|=0.
\]
This implies the continuity of the reward function $r$. Moreover the continuity of the mapping $(\widetilde{x},\alpha) \mapsto Q_L v (\widetilde{x},\alpha)$ follows from the fact that, for every function $v \in \mathcal B_b$, by Assumption \ref{ass3} the mapping
\[
(\widetilde x, \alpha)\mapsto\int_{\R}v(t, w(1+hz), \pi^1(1+u^1(t,\bpi, z)), \dots, \pi^1(1+u^1(t,\bpi, z)) )\eta^\P(t, e_i, \ud z)
\]
is continuous. To prove this we can apply for instance \cite[Lemma A5]{colaneri2016shall} since, in our setting,  $t \mapsto \eta^\P(t, e_i, z)$ is continuous and $\lambda^{max}:=\underset{t\in[0,T]}{{\underset{i \in \{1, \dots, K\}}{\sup}}}\eta^\P(t, e_i, \ud z)<\infty$.
\end{proof}

\section*{Acknowledgments}
The authors thank the participants of Vienna Seminar in Mathematical Finance and Probability and Brown Bag Seminar of WU Vienna University of Business and Economics, especially to R\"{u}diger Frey and Uwe Schmock. The authors thank Claudia Ceci for carefully reading the article and making many valuable suggestions for improvement.
 S\"{u}han Altay gratefully acknowledges partial financial support from the Austrian Science Fund (FWF) under grant P25216.
The work on this paper was completed while Zehra Eksi was visiting the Department of Economics, University of Perugia as a part of the ACRI Young Investigator Training Program (YITP). The support of the Association of Italian Banking Foundations and Savings Banks (ACRI) is greatly acknowledged.

\bibliographystyle{plainnat}
\bibliography{biblio}

\end{document}